\numberwithin{equation}{subsection}
\theoremstyle{plain}
\newtheorem{thm}{Theorem}[subsection]
\newtheorem{lem}[thm]{Lemma}
\newtheorem{prop}[thm]{Proposition}
\theoremstyle{definition}
\newtheorem{defn}[thm]{Definition}
\newtheorem*{thm*}{Theorem}
\newtheorem*{lem*}{Lemma}
\newtheorem*{prop*}{Proposition}
\newtheorem*{cor*}{Corollary}
\newtheorem*{exe*}{Exercise}
\newtheorem*{defn*}{Definition}
\newtheorem{rem}[thm]{Remark}
\theoremstyle{remark}
\newcommand{\R}{\mathbb{R}}
\newcommand{\Z}{\mathbb{Z}}
\newcommand{\calY}{\mathcal{Y}}
\newcommand{\ii}{{\mathrm{i}}}
\newcommand{\dd}{{\mathrm{d}}}
\DeclareMathOperator{\gh}{gh} 
\newcommand{\de}{\partial}
\newcommand{\calB}{\mathcal{B}}
\newcommand{\calH}{\mathcal{H}}
\newcommand{\calS}{\mathcal{S}}
\newcommand{\calG}{\mathcal{G}}
\newcommand{\calL}{\mathcal{L}}
\newcommand{\calM}{\mathcal{M}}
\newcommand{\calQ}{\mathcal{Q}}
\newcommand{\calT}{\mathcal{T}}
\newcommand{\calP}{\mathcal{P}}
\newcommand{\calF}{\mathcal{F}}
\def\gpd{\,\lower1pt\hbox{$\longrightarrow$}\hskip-.24in\raise2pt
               \hbox{$\longrightarrow$}\,}
\let\Tilde=\widetilde
\let\Hat=\widehat
\newcommand{\I}{\mathrm{i}}
\newcommand{\calV}{\mathcal{V}}
\newcommand{\upint}{\DOTSI\upintop\ilimits@}
\newcommand{\upoint}{\DOTSI\upointop\ilimits@}
\tikzset{residual/.style={draw, shape=circle, black,inner sep=1pt}}
\title[Equivariant BV-BFV Formalism]{Equivariant BV-BFV Formalism}
\author[A. S. Cattaneo]{Alberto S. Cattaneo}
\author[N. Moshayedi]{Nima Moshayedi}
\address{Institut f\"ur Mathematik\\ Universit\"at Z\"urich\\ 
Winterthurerstrasse 190
CH-8057 Z\"urich}
\email[A.~S.~Cattaneo]{cattaneo@math.uzh.ch}
\address{Institut f\"ur Mathematik\\ Universit\"at Z\"urich\\ 
Winterthurerstrasse 190
CH-8057 Z\"urich}
\email[N.~Moshayedi]{nima.moshayedi@math.uzh.ch}
\date{\today}
\thanks{ASC acknowledges partial support of the SNF Grant No.\ 200021\_227719 and of the Simons Collaboration on Global Categorical Symmetries. This research was (partly) supported by the NCCR SwissMAP, funded by the Swiss National Science Foundation. This article is based upon work from COST Action 21109 CaLISTA, supported by COST (European Cooperation in Science and Technology) (www.cost.eu), MSCA-2021-SE-01-101086123 CaLIGOLA, and MSCA-DN CaLiForNIA - 101119552.}
\begin{document}
\makeatletter
\providecommand\@dotsep{5}
\def\listtodoname{List of Todos}
\def\listoftodos{\@starttoc{tdo}\listtodoname}
\makeatother

\begin{abstract}
   The recently introduced equivariant BV formalism is extended to the case of manifolds with boundary under appropriate conditions. AKSZ theories are presented as a practical example.  
\end{abstract}

\maketitle

\tableofcontents

\section{Introduction}
The BV formalism is a general method to describe field theory with symmetries. In the presence of a boundary of space--time, it gives rise---under some regularity assumptions---to a nice coupling with the boundary BFV formalism (the cohomological description of the reduced phase space). It was studied in \cite{CMR1} and called the BV-BFV formalism (see also \cite{CattMosh1} for an introduction). This is also the starting point for perturbative quantization in the presence of a boundary \cite{CMR3,CMR2}.

In \cite{CattZabz2019} the BV formalism was extended to the equivariant case arising when a vector field 
(more generally, an involutive family of vector fields) on space--time is given. 

In this paper, we study the extension of the equivariant BV formalism in the presence of a boundary and obtain the conditions under which the construction leading to the equivariant BV-BFV formalism works.

To summarize the result, just recall that, in its most general formulation, what is studied in \cite{CattZabz2019} is a violation of the classical master equation of the BV formalism. Equivalently, one drops the condition 
that the BV charge $\mathcal{Q}$ (the Hamiltonian vector field of the BV action) is cohomological (i.e., $[\calQ,\calQ]=0$). We show that the BV-BFV construction works through if $[\calQ,\calQ]$ is a symplectic vector field (or, equivalently, a Hamiltonian vector field) also in the presence of a boundary (in the absence of a boundary this is automatically true).

The particularly important case of the equivariant extension \cite{CattZabz2019}
of AKSZ theories \cite{AKSZ} then turns out to be compatible with the boundary if and only if all the given vector fields in the involutive family are tangent to the boundary.

The original results of this paper are in Section~\ref{s-eqBVBFV}. In the previous sections we recall background material: Section~\ref{s-BVBFV} covers the BV formalism, the BV-BFV formalism, the equivariant BV formalism, and their quantization, whereas Section~\ref{s-AKSZ} gives more details on the AKSZ construction.

\textbf{Acknowledgements} We thank Francesco Bonechi for 
providing valuable comments and remarks. We also thank an anonymous referee for helpful comments. 

\section{The BV-BFV Formalism}\label{s-BVBFV}
\subsection{Classical BV Formalism} We start by recalling the classical setting of the BV formalism.

\begin{defn}[Lagrangian field theory]
A $d$-dimensional \emph{Lagrangian field theory} assigns to a $d$-manifold $M$ a pair $(F_M,S_M)$, where $F_M$ is some \emph{space of fields}\footnote{An important class of examples of $F_M$ is given by the space of sections of a vector bundle over $M$, e.g., vector fields or differential forms on $M$. However, it can be more complicated such as the space of connections or metrics on $M$.} where elements are sections of some fibre bundle over $M$ and $S_M\in C^\infty(F_M)$ is a function on $F_M$, called the \emph{action (functional)}. We say that a Lagrangian field theory is \emph{local} if we can express $S_M$ as 
\[
S_M(\phi)=\int_M\mathscr{L}(j^k\phi),\quad \phi\in F_M,
\]
where $\mathscr{L}$ denotes a \emph{Lagrangian density} on $M$ and $j^k\phi$ denotes the $k$-th jet prolongation of the field $\phi\in F_M$ as a section of the jet bundle of $F_M$.
\end{defn}

\begin{defn}[BV manifold]
A \emph{BV manifold} consists of a triple $(\calF,\calS,\omega)$, where $\calF$ is a $\Z$-graded\footnote{The $\Z$-grading is called the \emph{ghost number} in the physics literature according to the construction by Faddeev and Popov \cite{FP}.} supermanifold,\footnote{The supermanifold structure induces an additional $\Z_2$-grading which is referred to as \emph{parity}, often called \emph{odd} and \emph{even}.} $\calS$ an even function on $\calF$ of degree 0 and $\omega$ an odd symplectic form on $\calF$ of degree $-1$, such that the \emph{classical master equation (CME)} holds: 
\begin{equation}
\label{eq:CME}
(\calS,\calS)=0.
\end{equation}
Here, $(\enspace,\enspace)$ denotes the odd Poisson bracket of degree $+1$ induced by $\omega$.
\end{defn}

\begin{rem}
We call $\calF$ the \emph{BV space of fields}, $\calS$ the \emph{BV action (functional)} and $\omega$ the \emph{BV symplectic form}. The odd Poisson bracket $(\enspace,\enspace)$ is usually called the \emph{BV bracket} (or \emph{anti-bracket}).\footnote{The BV bracket is usually denoted by round brackets in order to emphasize the fact that it is an \emph{odd} Poisson bracket. This notation is originally due to Batalin and Vilkovisky \cite{BV2}.} We will denote by $\mathrm{gh}(\sigma)\in\Z$ the $\Z$-grading and by $\vert \sigma\vert\in \Z_2$ the parity. If $\sigma$ is also a differential form on $M$, we will denote by $\deg\sigma$ its form degree. 
\end{rem}

We also consider the Hamiltonian vector field $\calQ$ of $\calS$, i.e., the unique vector field of degree $+1$ satisfying the equation 
\[
\iota_Q\omega=\delta\calS,
\]
where $\delta$ here denotes the de Rham differential on $\calF$. It is easy to see that the vector field $\calQ$ is actually \emph{cohomological}, i.e., we have $\calQ^2=0$ and moreover, by definition, it is \emph{symplectic}, i.e., we have $L_\calQ\omega=0$. Here $L$ denotes the Lie derivative. Note that, by definition, we have that $\calQ=(\calS,\enspace)$ and hence by the CME \eqref{eq:CME} we get $L_\calQ\calS=\calQ\calS=0$. We will call $\calQ$ the \emph{BV charge}.\footnote{The cohomological vector field $\calQ$ is the same as the one in the BRST formalism \cite{BRS1,BRS2,Tyutin1976} measuring the symmetries of the theory, thus often also called the \emph{BRST charge}.}

\begin{defn}[BV theory]
A $d$-dimensional \emph{BV theory} is the assignment $M\mapsto (\calF_M,\calS_M,\omega_M)$ of a closed, connected $d$-manifold $M$ to a BV manifold. 
\end{defn}

\subsubsection{2D $BF$ theory}
Let $M$ be a connected closed 2-manifold and let $G$ be a Lie group with Lie algebra $\mathfrak{g}$. Let $P\to M$ be a trivial principal $G$-bundle over $M$. Consider the space of connection 1-forms $\Omega^1(M,\mathrm{ad} P)$.\footnote{We work around the trivial connection.} We define the space of fields for 2-dimensional $BF$ theory to be 
\[
F_M:=\Omega^1(M,\mathrm{ad} P)\oplus \Omega^{0}(M,\mathrm{ad}^*P),
\]
where $\mathrm{ad}^*P$ denotes the coadjoint bundle of $P$. The $BF$ action is given by 
\begin{equation}
\label{eq:BF_action}
    S_M:=\int_M\langle B, F_A\rangle=\int_M\left(\langle B,\dd A\rangle +\frac{1}{2}\langle B,[A,A]\rangle\right),
\end{equation}
where $(A,B)\in F_M$ and $F_A:=\dd A+\frac{1}{2}[A,A]$ denotes the curvature 2-form of the connection 1-form $A$. We have denoted by $\langle\enspace,\enspace\rangle$ the pairing between the forms of adjoint and coadjoint type as an extension of the pairing between $\mathfrak{g}$ and $\mathfrak{g}^*$. It is easy to see that the critical points of $S_M$ are given by pairs $(A,B)\in F_M$ where $A$ is a flat connection and $\dd_AB=0$. Here $\dd_A$ denotes the covariant derivative with respect to the connection $A$. 
Denote by $\calG$ the group of gauge transformations of the space of connection 1-forms on $P$ and let $A^g$ denote the gauge transformed connection for a gauge transformation $g\in\calG$. We can let $\calG$ also act on $\Omega^0(M,\mathrm{ad}^*P)$ by the coadjoint action. 
It is then not hard to see that the $BF$ action \eqref{eq:BF_action} is invariant with respect to the gauge transformation
\begin{align}
    A&\mapsto A^g,\\
    B&\mapsto B^{g}:=\mathrm{Ad}^*_{g^{-1}}B, 
\end{align}
i.e., we have 
\begin{multline*}
S_M(A^g,B^{g})=\int_M\left(\langle B^{g},\dd A^g\rangle +\frac{1}{2}\langle B^{g},[A^g,A^g]\rangle\right)\\
=\int_M\left(\left\langle \mathrm{Ad}^*_{g^{-1}}B, 
\dd A^g\right\rangle +\frac{1}{2}\left\langle \mathrm{Ad}^*_{g^{-1}}B, 
[A^g,A^g]\right\rangle\right)\\
=\int_M\left(\langle B,\dd A\rangle +\frac{1}{2}\langle B,[A,A]\rangle\right)=S(A,B).
\end{multline*}

If $\mathfrak{g}=\R$, we speak of \emph{abelian} $BF$ theory. In this case we have 
\[
F_M=\Omega^1(M)\oplus \Omega^0(M),
\]
and
\[
S_M=\int_M B\land \dd A.
\]
The critical points here are pairs $(A,B)\in F_M$ of the form $\dd A=0$ 
and $\dd B=0$. 

The BV formulation of 2-dimensional $BF$ theory is not hard to construct. The BV space of fields associated to the 2-manifold $M$ is given by 
\[
\calF_M:=\Omega^\bullet(M,\mathrm{ad}P)[1]\oplus \Omega^\bullet(M,\mathrm{ad}^*P),
\]
where $\Omega^\bullet=\bigoplus_{j=0}^2\Omega^j$.
The \emph{superfields} are pairs $(\mathbf{A},\mathbf{B})\in\calF_M$ of the form 
\begin{align}
    \mathbf{A}&:=c+A+B^+,\\
    \mathbf{B}&:=B+A^++c^+,
\end{align}
where $\mathrm{gh}(c)=1$, $\mathrm{gh}(A)=\mathrm{gh}(B)=0$, $\mathrm{gh}(A^+)=\mathrm{gh}(B^+)=-1$, $\mathrm{gh}(c^+)=-2$. Moreover, $\deg(c)=\deg(B)=0$, $\deg(A)=\deg(A^+)=1$, $\deg(B^+)=\deg(c^+)=2$. For a classical field $\phi$, we have denoted by $\phi^+$ its \emph{anti-field} with the property $\mathrm{gh}(\phi)+\mathrm{gh}(\phi^+)=-1$ and $\deg(\phi)+\deg(\phi^+)=2$. Note that we have denoted by $c$ the \emph{ghost field}.
We can define the curvature of the superconnection $\mathbf{A}$ by $\mathbf{F}_\mathbf{A}$ defined by 
\[
\mathbf{F}_\mathbf{A}:=F_{A_0}+\dd_{A_0}\mathbf{a}+\frac{1}{2}[\mathbf{a},\mathbf{a}],
\]
where here $[\enspace,\enspace]$ denotes the induced bracket on the super Lie algebra, $A_0$ is some reference connection 1-form and $\mathbf{a}:=\mathbf{A}-A_0\in \Omega^\bullet(M,\mathrm{ad}P)[1]$. The BV action is then given by 
\begin{equation}
\label{eq:BV_BF_action}
\calS_M(\mathbf{A},\mathbf{B}):=\int_M\langle \mathbf{B},\mathbf{F}_\mathbf{A}\rangle,
\end{equation}
where $\langle\enspace,\enspace\rangle$ here denotes the pairing between the forms of adjoint and coadjoint type as an extension of the pairing between $\mathfrak{g}$ and $\mathfrak{g}^*$ with shifted degree, i.e., with additional sign coming from the $\Z$-grading. It is then not hard to see that $\calS_M$ satisfies the CME $(\calS_M,\calS_M)=0$. Note also that $\calQ_M\mathbf{A}=(\calS_M,\mathbf{A})=\mathbf{F}_\mathbf{A}$ and $\calQ_M\mathbf{B}=(\calS_M,\mathbf{B})=\dd_\mathbf{A}\mathbf{B}$. In particular, the cohomological vector field of degree $+1$ is given by
\begin{multline}
\calQ_M=(\calS_M,\enspace)=\int_M \left(\dd\mathbf{A}\frac{\delta}{\delta \mathbf{A}}+\dd\mathbf{B}\frac{\delta}{\delta\mathbf{B}}\right)\\
=\int_M\Bigg(\dd c\frac{\delta}{\delta c}+\dd A\frac{\delta}{\delta A}+\dd B^+\frac{\delta}{\delta B^+}+\dd B\frac{\delta}{\delta B}+\dd A^+\frac{\delta}{\delta A^+}+\dd c^+\frac{\delta}{\delta c^+}\Bigg).
\end{multline}
The BV symplectic form of degree $-1$ is given by 
\[
\omega_M=\int_M\delta\mathbf{A}\land \delta\mathbf{B}
=\int_M\left(\delta c\land \delta c^++\delta A\land \delta A^++\delta B+\delta B^+\right).
\]

\begin{rem}
The general structure of $BF$ theory can be easily generalized to arbitrary dimension and is an example of an AKSZ theory (see Section~\ref{s-AKSZ}).
\end{rem}

\subsection{Quantum BV Formalism}
In the quantum setting, we consider the operator $\Delta$, the canonical \emph{BV operator} (or \emph{BV Laplacian}) introduced by Khudaverdian in \cite{Khudaverdian2004}, acting on half-densities and
formally extended to the infinite-dimensional setting. There are several ways of characterizing this operator. One characterization is to choose a symplectomorphism $\calF\to T^*[-1]N$, for some manifold $N$, invoking the global Darboux theorem for odd symplectic manifolds \cite{S}. Then we can observe that half-densities are canonically identified with differential forms on $N$. Thus, we can consider $\Delta$ as the de Rham differential on $N$. Another characterization, on a finite-dimensional odd-symplectic supermanifold $(\calM,\omega)$ with local coordinates $(x_i,\theta^i)$, where $x_i$ denote the even coordinates and $\theta^i$ denote the corresponding odd momenta (i.e., $\omega=\sum_{i}\dd \theta_i\land \dd x^i$), is that it is given by 
\[
\Delta=\sum_{i}\frac{\de^2}{\de x_i\de\theta^i}.
\]
Its fundamental algebraic property is $\Delta^2=0$.

We often need the BV operator to act on functions. For this we pick a nowhere vanishing half-density $\mu$ satisfying $\Delta\mu=0$. Such a reference half-density exists (e.g., by the symplectomorphism from $\calF\to T^*[-1]N$). 
Using this we can define a BV operator $\Delta_\mu$ acting on functions via
\[
\Delta_\mu f \coloneqq (\Delta f\mu)/\mu.
\]
This operator satisfies 
$\Delta^2=0$ and 
\[
\Delta_\mu(fg)=\Delta_\mu fg\pm f\Delta_\mu g \pm(f,g).
\]
It can also be computed by
\begin{equation}
\label{eq:Laplacian}
\Delta_\mu f=\frac{1}{2}\mathrm{div_{\mu^2}}(f,\enspace),\quad f\in C^\infty(\calF),
\end{equation}
where $\mathrm{div}_{\mu^2}$ denotes the divergence operator on vector fields determined by the density $\mu^2$.
When the reference half-density is understood, we will simply write $\Delta$ also for the BV operator acting on functions.

See also \cite{Cattaneo2023} for a detailed exposition on the BV Laplacian and \cite{Severa2006} for another mathematical interpretation.

    An important observation by Schwarz \cite{S} (in the finite-dimensional case) is that, assuming orientation, a half-density on $\calF$ determines a density on every Lagrangian submanifold of $\calF$. 

\begin{thm}[Batalin--Vilkovisky--Schwarz\cite{BV2,S}]\label{thm:BV}
Consider two half-densities $f,g$ on an odd-symplectic supermanifold $(\calM,\omega)$. Then\footnote{We assume below that the given half-densities 
have fast decay at infinity.} 
\begin{enumerate}
    \item if $f=\Delta g$ (\emph{BV exact}), we get that
    \[
    \int_{\calL\subset\calM} f=\int_{\calL\subset \calM} \Delta g=0, 
    \]
    for any Lagrangian submanifold $\calL\subset \calM$.
    \item if $\Delta f=0$ (\emph{BV closed}), we get that
    \[
    \frac{\dd}{\dd t}\int_{\calL_t\subset \calM}f=0,
    \]
    for any differentiable family $(\calL_t)$ of Lagrangian submanifolds of $\calM$. In particular, when two Lagrangian submanifolds $\calL_1\subset \calM$ and $\calL_2\subset \calM$ can be 
    deformed into each other, we have 
    \[
    \int_{\calL_1\subset \calM}f=\int_{\calL_2\subset\calM}f.
    \]
\end{enumerate}
\end{thm}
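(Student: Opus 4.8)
The plan is to reduce the global statement to a local computation in Darboux coordinates adapted to the Lagrangian, using the two structural facts recalled above. Schwarz's global Darboux theorem provides a symplectomorphism $\calM\cong T^*[-1]N$ under which half-densities are identified with inhomogeneous differential forms on $N$ and the canonical BV operator $\Delta$ with the de Rham differential $\dd$; and, separately, a half-density restricts canonically to a genuine density on any Lagrangian submanifold $\calL\subset\calM$. Conceptually, under these identifications the number $\int_\calL f$ is a de Rham period, so (1) is Stokes' theorem for exact forms and (2) is homotopy invariance of periods of closed forms. I would nonetheless argue directly with half-densities, since this keeps the role of the odd-symplectic structure and of the Berezin integration transparent and avoids fixing a particular normalization of the identification.

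For part (1), fix $\calL$ and choose, near $\calL$, Darboux coordinates $(q^a,p_a)$ in which $\omega=\sum_a\dd p_a\wedge\dd q^a$ and $\calL=\{p=0\}$, the parities being arranged so that $\omega$ is odd and so that the coordinate half-density $\sqrt{\lvert\calD(q,p)\rvert}$ restricts to the coordinate density $\lvert\calD(q)\rvert$ on $\calL$. Writing $g=h\,\sqrt{\lvert\calD(q,p)\rvert}$, the operator reads $\Delta=\sum_a\de^2/\de q^a\de p_a$ up to the standard signs, and the density that $\Delta g$ induces on $\calL$ is $\bigl(\sum_a\de_{q^a}(\de_{p_a}h)\bigr)\big\vert_{p=0}\,\lvert\calD(q)\rvert$. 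The key observation is that this is a total $q$-divergence: $\de^2 h/\de q^a\de p_a\vert_{p=0}=\de_{q^a}\!\bigl(\de_{p_a}h\vert_{p=0}\bigr)$, so $\Delta g\vert_\calL=\sum_a\de_{q^a}V^a$ with $V^a:=\de_{p_a}h\vert_{p=0}$. Integrating over $\calL$ kills each term: for an even coordinate $q^a$ by ordinary Stokes' theorem (using that $\calL$ is closed and the half-densities are integrable), and for an odd coordinate $q^a=\theta$ because the Berezin integral of $\de_\theta(\,\cdot\,)$ vanishes identically. Hence $\int_\calL\Delta g=0$. Since $\int_\calL\Delta g$ is manifestly chart-independent, the local vanishing glues (via a partition of unity subordinate to such adapted charts) to the global assertion.

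For part (2), realise the family as $\calL_t=\psi_t(\calL_0)$ for an isotopy with $\tfrac{\dd}{\dd t}\psi_t=X_t\circ\psi_t$; since the $\calL_t$ remain Lagrangian, the generating field $X_t$ may be taken symplectic, hence (locally) Hamiltonian, $X_t=(H_t,\,\cdot\,)$. The transport theorem for densities along a flow gives
\[
\frac{\dd}{\dd t}\int_{\calL_t}f=\int_{\calL_t}L_{X_t}f,
\]
where $L_{X_t}f$ is the half-density restricted to $\calL_t$. Now I would use the identity, valid relative to a $\Delta$-closed reference half-density and following from the Leibniz rule $\Delta(Hf)=(\Delta H)f\pm H\Delta f\pm(H,f)$ together with $L_{X_H}\mu=(\Delta_\mu H)\mu$, that
\[
L_{X_H}f=\Delta(Hf)\pm H\,\Delta f.
\]
When $\Delta f=0$ the second term drops, so $L_{X_t}f=\Delta(H_t f)$ is BV exact, and part (1) forces $\int_{\calL_t}L_{X_t}f=0$. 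Thus $\tfrac{\dd}{\dd t}\int_{\calL_t}f=0$; integrating in $t$ yields $\int_{\calL_1}f=\int_{\calL_2}f$ whenever $\calL_1,\calL_2$ are joined by such a family.

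The hard part is the local identity underlying part (1): one must check, with every odd sign correct, that the canonical restriction of the second-order operator $\Delta g$ to $\calL$ really equals the coordinate divergence $\sum_a\de_{q^a}V^a\,\lvert\calD(q)\rvert$, i.e.\ that the Berezinian behaves as claimed under the transverse integration defining the restriction, and that the ``total derivative $\Rightarrow$ zero integral'' step is legitimate across charts (the even directions needing $\calL$ closed and the stated integrability). The Lie-derivative identity in part (2) is then a formal consequence of the Leibniz rule for $\Delta$ and of $\Delta_\mu f=\tfrac12\,\mathrm{div}_{\mu^2}(f,\,\cdot\,)$, but it too requires care with the parity of $H_t$ and the degree $-1$ of $\omega$. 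In the infinite-dimensional application one assumes throughout, as in the statement, integrability of the half-densities over the Lagrangians and enough regularity of $(\calL_t)$ to differentiate under the integral sign and to apply Stokes' theorem.
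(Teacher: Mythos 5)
First, note that the paper itself does not prove this theorem: it is quoted from Batalin--Vilkovisky and Schwarz \cite{BV2,S}, and the tools the paper recalls immediately before the statement (the global symplectomorphism $\calM\to T^*[-1]N$ under which half-densities become differential forms on $N$ and $\Delta$ becomes the de Rham differential, together with the canonical restriction of half-densities to densities on Lagrangian submanifolds) are precisely the ingredients of Schwarz's proof --- i.e., the ``conceptual'' route you sketch in your first paragraph and then deliberately set aside. Your direct half-density argument for part (1) is essentially correct: the adapted Darboux chart, the total-divergence structure of $(\Delta g)\vert_\calL$, and the Stokes/Berezin vanishing are all right. The only loose point there is the gluing: one should write $\Delta g=\sum_i\Delta(\rho_i g)$ by linearity of $\Delta$, each $\rho_i g$ compactly supported in an adapted chart, rather than ``localize'' the already-computed $\Delta g$; vanishing of integrals of non-compactly-supported local pieces does not glue by itself. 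This is a minor, fixable point.

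The genuine gap is in part (2). The theorem asserts invariance for an \emph{arbitrary} differentiable family $(\calL_t)$, while your argument covers only families generated by an ambient, \emph{globally} Hamiltonian isotopy. Two steps are unaddressed: (a) that the family can be realized as $\calL_t=\psi_t(\calL_0)$ with $\psi_t$ an ambient symplectic isotopy at all (an odd analogue of the Weinstein extension argument, plausible but not automatic); and (b), more seriously, the slide from ``symplectic, hence (locally) Hamiltonian'' to using the identity $L_{X_{H}}f=\Delta(H f)\pm H\,\Delta f$, which requires a globally defined $H_t$. When $\iota_{X_t}\omega$ is closed but not exact there is no way to patch local Hamiltonians: $\Delta$ is second order, there is no contraction $\iota_X$ acting on half-densities and hence no Cartan formula, and $\sum_i\rho_i\,\Delta(H_i f)$ is not BV-exact, so part (1) cannot be invoked. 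Closing this gap requires proving that the relevant flux-type class $[\iota_{X_t}\omega\vert_{\calL_t}]$ vanishes in the odd-symplectic setting --- a real super-geometric statement you neither formulate nor prove. This is exactly where Schwarz's route is stronger: after the identification with $T^*[-1]N$, the integral over a Lagrangian becomes the integral of the closed form corresponding to $f$ over a cycle in $N$, and then $\frac{\dd}{\dd t}$ of the period vanishes by the ordinary Cartan formula and Stokes' theorem, with no exactness hypothesis on the generating field. So either supply the vanishing-of-flux argument, or use the de Rham reduction (as in \cite{S}) for part (2).
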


For the setting of quantum gauge field theories, we are mainly interested in the case where our odd-symplectic supermanifold is given by a pair $(\calF,\omega)$ and the half-density $f$ is of the form 
\[
\exp\left(\frac{\I}{\hbar} S\right)\rho\in \mathrm{Dens}^{\frac{1}{2}}(\calF),
\]
where $\rho$ denotes a reference, $\Delta$-closed, nowhere vanishing half-density on $\calF$. Then, when considering an integral of the form $\int_\calL \exp\left(\frac{\I}{\hbar}S\right)\rho$, the choice of Lagrangian submanifold corresponds to choosing a gauge fixing\footnote{When choosing a gauge-fixing fermion $\Psi$, the case where we can reduce to the cohomological setting of the BRST formalism, is given by taking the Lagrangian submanifold to be the graph of the differential of the gauge-fixing fermion, i.e., we take $\calL=\mathrm{graph}(\dd\Psi)$.}.
Note that the second point of Theorem \ref{thm:BV} is then a condition for gauge  independence of the classical theory described by the action $S$. In the case of interest, we want thus
\begin{equation}
\label{eq:QME}
\Delta \exp\frac{\I}{\hbar}S=0\Longleftrightarrow \frac{1}{2}(S,S)-\I\hbar\Delta S=0.
\end{equation}
Equation \eqref{eq:QME} is called the \emph{quantum master equation (QME)}. Note that in the semi-classical limit $\hbar\to 0$ we get the classical master equation (CME) $(\calS,\calS)=0$. When $S$ depends on $\hbar$ as a formal power series $S=S_0+\hbar S_1+\hbar^2S_2+\dotsm$, we can try to solve the QME order by order in $\hbar$. 

When we consider the BV action $\calS$, which gives a solution to the CME \eqref{eq:CME}, we often assume that $\Delta\calS=0$, in which case $\calS$ then also solves the QME. This is true for many different theories of interest but is not immediate. In general, we need to find a suitable regularization in order to make the QME hold. Note also that by Equation \eqref{eq:Laplacian}, we have 
\[
\Delta \calS=\frac{1}{2}\mathrm{div}\,\calQ,
\]
so the condition is that the reference half-density $\rho$ is $Q$-invariant.

\subsection{Classical BV-BFV Formalism}
In the BV formalism the source manifold $M$ was always assumed to have empty boundary. In order to overcome this and still make sense of a gauge formalism for manifolds with boundary, one couples the Lagrangian approach of the BV theory in the bulk to the Hamiltonian approach of the \emph{BFV theory}\footnote{The letters BFV stand for Batalin--Fradkin--Vilkovisky due to their work \cite{BF1,BF2,FV1,FV2} where they developed the Hamiltonian setting.} on the boundary such that everything is coherent at the end. The coupling of bulk and boundary was considered in the classical setting first in \cite{CMR1}.

\begin{defn}[BFV manifold]
A \emph{BFV manifold} is a quadruple 
\[
(\calF^\de,\omega^\de,\calS^\de,\calQ^\de)
\]
where $\calF^\de$ is a $\Z$-graded supermanifold, $\calS^\de$ is an odd function on $\calF^\de$ of degree $+1$, $\omega$ is an even symplectic form on $\calF^\de$ of degree 0, and $\calQ^\de$ is a cohomological vector field on $\calF^\de$ of degree $+1$ such that $\iota_{\calQ^\de}\omega^\de=\delta \calS^\de$. 
This implies that $\{\calS^\de,\calS^\de\}_{\omega^\de}=0$, where $\{\enspace,\enspace\}_{\omega^\de}$ denotes the even Poisson bracket\footnote{We will often just write $\{\enspace,\enspace\}$ without emphasizing on $\omega^\de$ whenever it is clear. Moreover, we will also call this the \emph{boundary} Poisson bracket.} of degree 0 induced by the symplectic form $\omega^\de$.
We say that a BFV manifold is \emph{exact}, if the symplectic form is exact, i.e., $\omega^\de=\delta\alpha^\de$.
\end{defn}

\begin{rem}
We call $\calF^\de$ the \emph{BFV space of fields}, $\calS^\de$ the \emph{BFV action (functional)}, $\omega^\de$ the \emph{BFV symplectic form} and $\calQ^\de$ the \emph{BFV charge}. If we consider the exact case, we will call $\alpha^\de$ the \emph{BFV 1-form}.
\end{rem}

\begin{defn}[BV-BFV manifold over exact BFV manifold]
A \emph{BV-BFV manifold over an exact BFV manifold} $(\calF^\de,\omega^\de=\delta\alpha^\de,\calS^\de,\calQ^\de)$ is a quintuple 
\[
(\calF,\omega,\calS,\calQ,\pi),
\]
where the objects are defined as follows: $\calF$ is a $\mathbb{Z}$-graded supermanifold, $\omega$ is an odd symplectic form on $\calF$ of degree $-1$, $\calS$ is an even function on $\calF$ of degree $0$, $\calQ$ is a cohomological vector field of degree $+1$, and $\pi\colon \calF\to \calF^\de$ is a surjective submersion, such that 
\begin{enumerate}
    \item $\iota_\calQ\omega=\delta\calS+\pi^*\alpha^\de$,
    \item $\delta\pi \calQ=\calQ^\de$,
\end{enumerate}
where $\delta\pi$ denotes the pushforward of the surjective submersion $\pi$.
\end{defn}

Note that, in the closed setting, we have required that the CME $L_\calQ\calS=0$ holds, whereas in the setting with boundary the Lie derivative $L_\calQ\calS$ is not zero anymore, but given entirely in terms of boundary data.
Indeed, one can easily prove \cite{CMR3} that 
\begin{equation}
\label{eq:mCME}
L_\calQ\calS=\pi^*\left(2\calS^\de-\iota_{\calQ^\de}\alpha^\de\right)
\end{equation}
and
\begin{equation}\label{e:mCME2}
\frac12 \iota_\calQ \iota_\calQ  \omega = \pi^* \calS^\de.
\end{equation}
Both equations express the violation of the CME by boundary terms.

We recall their proofs because in the equivariant case we will generalize them.
We know that
\[
L_\calQ=\iota_\calQ\delta-\delta \iota_\calQ
\]
and that the same holds for $\calQ^\de$. We will now apply $L_\calQ$ to the equation $\iota_\calQ\omega=\delta\calS+\pi^*\alpha^\de$. By the definition of $L_\calQ$ and since $\calQ^2=0$, we get $[L_\calQ,\iota_\calQ]=L_\calQ\iota_\calQ-\iota_\calQ L_\calQ=0$ and we know that 
\[
L_\calQ\omega=\pi^*\delta\alpha^\de=\pi^*\omega^\de.
\]
Thus, we get
\begin{equation}\label{eq:mCME_1}
L_\calQ\iota_\calQ\omega=\iota_\calQ L_\calQ\omega=\pi^*\iota_{\calQ^\de}\delta\alpha^\de.
\end{equation}
If we then apply the Lie derivative $L_\calQ$ to the CME, we get
\begin{multline*}
L_\calQ\iota_\calQ\omega=L_\calQ\delta\calS+L_\calQ\pi^*\alpha^\de=\delta\iota_\calQ\delta\calS+\pi^*(L_{\calQ^\de}\alpha^\de)\\
=\delta L_\calQ\calS+\pi^*\Big(\iota_{\calQ^\de}\delta\alpha^\de-\delta\iota_{\calQ^\de}\alpha^\de\Big).
\end{multline*}
In the last line, we used Cartan's magic formula for $L_{\calQ^\de}$. Now, using \eqref{eq:mCME_1} and  $\delta\calS^\de=\iota_{\calQ^\de}\omega^\de$, we get
\[
\pi^*(\delta\calS^\de)=\delta L_{\calQ}\calS+\pi^*(\delta \calS^\de)-\pi^*(\delta\iota_{\calQ^\de}\alpha^\de),
\]
which is equivalent to 
\[
\delta\left(L_\calQ\calS-\pi^{*}(2\calS^\de-\iota_{\calQ^\de}\alpha^\de)\right)=0.
\]
This means that function in the argument of $\delta$ is constant. However, since it is a function of degree $+1$ is zero, it must vanish. Therefore, we get \eqref{eq:mCME}
proving that the CME is spoiled only up to boundary data. To get \eqref{e:mCME2}, simply apply $\iota_\calQ$ to $\iota_\calQ\omega=\delta\calS+\pi^*\alpha^\de$ and use \eqref{eq:mCME}.

We call Equation \eqref{eq:mCME} the \emph{modified} CME \cite{CMR1}. Condition (2) tells us that the BV charge is projectable onto the BFV charge $\calQ^\de$. For examples and more insights on the classical BV-BFV formalism we refer to \cite{CMR1,CattMosh1}.

\subsection{Quantum BV-BFV Formalism}
As it was developed in \cite{CMR2} (see also \cite{CattMosh1}), we can express a quantum BV-BFV formalism by the following collection of data for a source manifold $M$ with boundary $\de M$:

\begin{enumerate}[$(i)$]
    \item The \emph{state space} $\calH^\calP_{\de M}$, a graded vector space associated to the boundary $\de M$ with a choice of a \emph{polarization}\footnote{A \emph{polarization} is the choice of an involutive Lagrangian subbundle of the tangent bundle of a manifold. In our case, $\calP$ is an involutive Lagrangian subbundle of $T\calF^\de_{\de M}$. Actually, a polarization is needed in order to construct the state space by techniques of \emph{geometric quantization} \cite{Wood97,BatesWeinstein2012}.} $\calP$ on $\calF^\de_{\de M}$. 
    \item The \emph{quantum BFV operator} $\Omega^\calP_{\de M}$, a coboundary operator on the state space $\calH^\calP_{\de M}$ which is determined as a quantization of the BFV action $\calS^\de_{\de M}$.
    \item The \emph{space of residual fields} $\calV_M$, a finite-dimensional graded manifold endowed with a symplectic form of degree $-1$ associated to $M$ and a polarization $\calP$ on $\calF^\de_{\de M}$. Moreover, we can define the graded vector space\footnote{We denote by $\widehat{\otimes}$ the completed tensor product.} 
    \[
    \calH^\calP_M:=\calH^\calP_{\de M}\widehat{\otimes}\mathrm{Dens}^\frac{1}{2}(\calV_M),
    \]
    where $\mathrm{Dens}^\frac{1}{2}(\calV_M)$ denotes the space of half-densities on $\calV_M$. This graded vector space is endowed with two commuting coboundary operators 
    \begin{align*}
        \Omega^\calP_M&:=\Omega^\calP_{\de M}\otimes \mathrm{id},\\
        \Delta^\calP_M&:=\mathrm{id}\otimes \Delta_\mathrm{res},
    \end{align*}
    where $\Delta_\mathrm{res}$ denotes the canonical BV Laplacian on half-densities on residual fields.
    \item A \emph{state} $\psi_M\in\calH^\calP_M$ that satisfies the \emph{modified quantum master equation (mQME)}
    \[
    \left(\hbar^2\Delta^\calP_M+\Omega^\calP_M\right)\psi_M=0.
    \]
    This equation is the quantum version of the modified CME \eqref{eq:mCME}.
\end{enumerate}

\section{AKSZ Theories}\label{s-AKSZ}
\subsection{Short Overview} An important class of BV theories were developed by Alexandrov, Kontsevich, Schwarz and Zaboronsky in \cite{AKSZ} and known today as AKSZ theories. They formulated a method  to obtain a solution of the CME by considering a special form for the space of fields, namely, a mapping space (actually, a space of internal homs) between graded manifolds. Let us first describe the ingredients needed to formulate these type of theories. Let $M$ be a closed, connected $d$-manifold and consider a differential graded symplectic manifold $(\calM,\omega)$ where the symplectic form is exact, i.e., $\omega=\dd \alpha$, and of degree $d-1$. Moreover, let $\Theta$ be a function on $\calM$ of degree $d$ such that $\{\Theta,\Theta\}_\omega=0$, where $\{\enspace,\enspace\}_\omega$ denotes the Poisson bracket of degree $1-d$ induced by $\omega$. Moreover, consider the Hamiltonian vector field $\calQ_{\mathcal{M}}$ for the Hamiltonian function $\Theta$ which is cohomological by definition, i.e., $(\calQ_\mathcal{M})^2=0$. Define the space of fields as 
\[
\calF^\mathrm{AKSZ}_M:=\mathrm{Map}(T[1]M,\calM).
\]
Using the symplectic structure on $\calM$, we can construct a symplectic structure on $\calF^\mathrm{AKSZ}_M$ by \emph{transgression}. Namely, we have a diagram 
\[
\begin{tikzcd}
{\mathrm{Map}(T[1]M,\mathcal{M})\times T[1]M} \arrow[d, "\pi"'] \arrow[r, "\mathsf{ev}"] & \mathcal{M} \arrow[ld, dashed] \\
{\mathrm{Map}(T[1]M,\mathcal{M})}                                                               &                               
\end{tikzcd}
\]
where $\mathsf{ev}$ denotes the evaluation map and $\pi$ the projection onto the first factor. Thus, on the level of forms, we can define a \emph{transgression map} $\mathbb{T}\colon \Omega^\bullet(\calM)\to \Omega^\bullet(\mathrm{Map}(T[1]M,\calM))$ by 
\[
\mathbb{T}(\eta):=\pi_*\mathsf{ev}^*\eta=\int_{T[1]M}\mathsf{ev}^*\eta,\quad \eta\in \Omega^\bullet(\calM).
\]
Hence, we define a symplectic form on $\calF^\mathrm{AKSZ}_M$ by 
\[
\omega_M:=\mathbb{T}(\omega)=\int_{T[1]M}\mathsf{ev}^*\omega,
\]
where $\omega$ is the symplectic form on $\calM$. Note that since $\omega$ was of degree $d-1$, we get that $\omega_M$ is of degree $(d-1)-d=-1$ which is the correct degree for a BV symplectic form. 
We can define a cohomological vector field on the mapping space as the sum of the lift of the de Rham differential on $M$ and the lift of the cohomological vector field on $\calM$:
\[
\calQ_M:=\widehat{\dd_M}+\widehat{\calQ_\mathcal{M}},
\]
where the hats denote the lift to the mapping space.
The BV action is then constructed by using the primitive $\alpha$ of $\omega$ and the Hamiltonian function $\Theta$:
\[
\calS^\mathrm{AKSZ}_M=\iota_{\widehat{\dd_M}}\mathbb{T}(\alpha)+\mathbb{T}(\Theta).
\]
It is then not difficult to see that $\calS^\mathrm{AKSZ}_M$ is indeed of degree $0$ and satisfies the CME.

\begin{rem}
A lot of relevant theories are of AKSZ type, such as e.g., Chern--Simons theory \cite{Chern1974,Witten1989,AS,AS2}, the Poisson sigma model \cite{I,SS1,CF4}, or Witten's $A$- and $B$-twisted sigma models \cite{Witten1988a,AKSZ}. 
\end{rem}

\section{Equivariant BV-BFV Formalism}\label{s-eqBVBFV}

\subsection{Equivariant BV Formalism}
In this section we recollect the results of
\cite{CattZabz2019} on the equivariant BV formalism.
The easiest way to prepare the background material is to start from the quantum version. We begin with the very general observation of what happens when the QME is violated. Namely, we introduce a functional $\calT$, with $\gh(\calT)=1$, that measures the failure of the QME:
\[
\calT\coloneqq\left(\frac\hbar\ii\right)^2\exp\left(-\frac{\I}{\hbar}\calS\right)\Delta\exp\frac{\I}{\hbar}\calS=\frac12(\calS,\calS)-\ii\hbar\Delta \calS.
\]
It turns out that the properties of the BV integral are still preserved if we only allow $\calT$-Lagrangian submanifolds, i.e., Lagrangian submanifolds on which $\calT$ vanishes.

If we in addition assume that $\Delta \calS=0$, as is often the case, or just ignore higher orders in $\hbar$, then we simply get
\begin{equation}\label{e:SS2T}
(\calS,\calS) = 2\calT,
\end{equation}
which expresses $\calT$ as the violation of the CME. This is the equation we are going to study in the presence of boundary.

\subsection{Preliminary Computations}
We want to start with some important definitions and relations in order to understand the the core of the computation for the case with boundary.

\begin{defn}[Weak BV manifold]
    A \emph{weak BV manifold} of degree $k$ is a quadruple $(\calM, \calQ, \omega, \calS)$, where $\calM$ is a graded manifold, $\calQ$ a vector field on $\calM$, $\omega$ a symplectic form on $\calM$ and $\calS$ a function on $\calM$ such that $\mathrm{gh}(\omega)=k-1$, $\mathrm{gh}(\calS)=\mathrm{gh}(\omega)+1$, $\mathrm{gh}(\calQ)=1$, and $L_{[\calQ,\calQ]}\omega=0$. 
\end{defn}
An example of a weak BV manifold is a BV manifold, where, in addition, $\calQ$ is required to be the Hamiltonian vector field of $\calS$ and $[\calQ,\calQ]=0$. We will see that a weak BV manifold contains anyway enough structure to be worth studying. The failure of $\calQ$'s being the Hamiltonian vector field of $\calS$ is what typically happens in a field theory with boundary and gives rise to the BV-BFV formalism. The failure of the condition $[\calQ,\calQ]=0$ is what occurs in the equivariant BV formalism, where the condition is replaced by the weaker one $L_{[\calQ,\calQ]}\omega=0$.

\begin{lem}
    If $k\not=-1$, we get that $[\calQ,\calQ]$ is Hamiltonian w.r.t. $\omega$.
\end{lem}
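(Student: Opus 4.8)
The plan is to reduce the statement to a graded Poincar\'e-type argument applied to the one-form $\iota_{[\calQ,\calQ]}\omega$. Recall that a vector field $X$ is symplectic precisely when $\iota_X\omega$ is $\delta$-closed, and Hamiltonian precisely when $\iota_X\omega$ is $\delta$-exact: since $\omega$ is a symplectic form we have $\delta\omega=0$, so Cartan's formula gives $L_X\omega=\delta\iota_X\omega$ (up to sign). Applying this to $X=[\calQ,\calQ]$ and using the hypothesis $L_{[\calQ,\calQ]}\omega=0$ shows that the one-form $\theta\coloneqq\iota_{[\calQ,\calQ]}\omega$ is closed. Thus it remains only to show that $\theta$ is exact; the Hamiltonian is then the function $H$ with $\delta H=\theta$ (up to the conventional factor).

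The key point is a ghost-number count. Since $\mathrm{gh}(\calQ)=1$, the graded commutator satisfies $\mathrm{gh}([\calQ,\calQ])=2$, and contraction by a vector field shifts ghost number by the ghost number of that field. Hence, as $\mathrm{gh}(\omega)=k-1$, the one-form $\theta=\iota_{[\calQ,\calQ]}\omega$ is homogeneous of ghost number $(k-1)+2=k+1$. This is exactly where the hypothesis $k\neq-1$ enters: it guarantees $\mathrm{gh}(\theta)=k+1\neq0$.

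To conclude that a closed form of nonzero ghost number is exact, I would use the Euler vector field $\calE$ generating the ghost-number grading (in coordinates $\calE=\sum_i \mathrm{gh}(x^i)\,x^i\partial_{x^i}$), which is intrinsically defined and whose Lie derivative records ghost number: $L_\calE\eta=\mathrm{gh}(\eta)\,\eta$ on homogeneous $\eta$. Cartan's formula $L_\calE=\delta\iota_\calE+\iota_\calE\delta$ applied to the closed form $\theta$ gives $(k+1)\theta=L_\calE\theta=\delta\iota_\calE\theta$, so
\[
\theta=\delta\!\left(\tfrac{1}{k+1}\,\iota_\calE\theta\right),
\]
which is manifestly exact. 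Therefore $[\calQ,\calQ]$ is Hamiltonian, with Hamiltonian function $H=\tfrac{1}{k+1}\,\iota_\calE\iota_{[\calQ,\calQ]}\omega$ of ghost number $k+1$.

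The main obstacle I anticipate is bookkeeping: pinning down the signs in the graded Cartan calculus and in the interior products (which depend on the parities and degrees involved) so that the eigenvalue computation $L_\calE\theta=(k+1)\theta$ and the identification of $H$ as a genuine Hamiltonian come out with the correct normalization. The conceptual content, however, is entirely the observation that closed forms of nonzero weight under the ghost Euler field are exact, which fails exactly at weight zero, i.e.\ at $k=-1$.
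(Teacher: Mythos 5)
Your proof is correct and is essentially the paper's own argument: the paper likewise takes the Euler vector field $E$, sets $\calH:=\iota_E\iota_{[\calQ,\calQ]}\omega$, and obtains $\delta\calH=(k+1)\iota_{[\calQ,\calQ]}\omega$ via Cartan's formula and the closedness of $\iota_{[\calQ,\calQ]}\omega$ (which follows from $L_{[\calQ,\calQ]}\omega=0$ and $\delta\omega=0$). Your write-up merely makes explicit the closedness step and the ghost-number count $\mathrm{gh}(\iota_{[\calQ,\calQ]}\omega)=k+1$ that the paper leaves implicit.
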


\begin{proof}
    Let $E$ be the Euler vector field and define $\calH:=\iota_E\iota_{[\calQ,\calQ]}\omega$. Then $\delta \calH=\delta\iota_E\iota_{[\calQ,\calQ]}\omega=(k+1)\iota_{[\calQ,\calQ]}\omega$.
\end{proof}

\begin{rem}
    If we consider a BF$^k$V theory \cite{CMR1}, we never consider the case of $k=-1$. Usually, we have $k\geq0$.
\end{rem}

    To motivate the above convention, suppose that $\iota_\calQ\omega=\delta\calS$ and $\frac{1}{2}(\calS,\calS)=\calT$, i.e., equation \eqref{e:SS2T}. Then $\calT=\frac{1}{2}L_\calQ\calS=\frac{1}{2}\iota_\calQ\delta \calS$. Hence, we get that \eqref{e:SS2T} is equivalent to
    \[
    \frac{1}{2}\iota_\calQ\iota_\calQ\omega=\calT.
    \]
    In turn, this is equivalent to
    \begin{equation}\label{e:QQomegadeltaT}
        \frac{1}{2}\iota_{[\calQ,\calQ]}\omega = -\delta \calT,
    \end{equation}
     a characterization of the violation of the CME as in \eqref{e:SS2T} that will be more appropriate to treat the case when a boundary is present. To prove this, simply observe that we have 
    \begin{multline*}
    \frac{1}{2}\iota_{[\calQ,\calQ]}\omega=\frac{1}{2}[L_\calQ,\iota_\calQ]\omega=\frac{1}{2}L_\calQ\iota_\calQ\omega-\underbrace{\frac{1}{2}\iota_\calQ L_\calQ\omega}_{=0}
    =\frac{1}{2}L_\calQ\delta\calS=-\frac{1}{2}\delta L_\calQ \calS, 
    \end{multline*}
    where in the first equality we have used the following
\begin{lem}\label{lem:relation_1}
    We have $\underbrace{[L_\calQ, \iota_\calQ]}_{=L_\calQ\iota_\calQ-\iota_\calQ L_\calQ}=\iota_{[\calQ,\calQ]}$.
\end{lem}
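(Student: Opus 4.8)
The plan is to recognize Lemma~\ref{lem:relation_1} as the specialization to $X=Y=\calQ$ of the standard graded Cartan identity $[L_X,\iota_Y]=\iota_{[X,Y]}$, and to prove it by the usual argument that two graded derivations of the exterior algebra which agree on generators must coincide. Both $[L_\calQ,\iota_\calQ]$ and $\iota_{[\calQ,\calQ]}$ are graded derivations of $(\Omega^\bullet(\calF),\land)$: since $\gh(\calQ)=1$, the operator $\iota_\calQ$ is an even derivation, $L_\calQ$ is an odd derivation, their graded commutator is again a derivation, and $\iota_{[\calQ,\calQ]}$ is a derivation of the same total degree. Because $\Omega^\bullet(\calF)$ is generated as an algebra by $\Omega^0(\calF)$ together with the exact one-forms $\delta\Omega^0(\calF)$, it suffices to verify that the two operators agree on functions and on exact one-forms.

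First I would check the identity on $f\in\Omega^0(\calF)$. Here interior products lower the form degree, so $\iota_\calQ f=0$ and $\iota_{[\calQ,\calQ]}f=0$, while $[L_\calQ,\iota_\calQ]f=L_\calQ\iota_\calQ f-\iota_\calQ L_\calQ f=-\iota_\calQ(\calQ f)=0$ since $\calQ f$ is again a function. Hence both sides vanish on $\Omega^0(\calF)$.

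The substantive step is the evaluation on an exact one-form $\delta f$, where the sign coming from the odd degree of $\calQ$ is essential. The key observation is that $L_\calQ$ \emph{anticommutes} with $\delta$: the graded commutator $[L_\calQ,\delta]$ is the anticommutator $L_\calQ\delta+\delta L_\calQ$, which vanishes by $\delta^2=0$ and Cartan's formula $L_\calQ=\iota_\calQ\delta-\delta\iota_\calQ$, so that $L_\calQ\delta=-\delta L_\calQ$. Using $\iota_\calQ\delta f=\calQ f$ one then finds $L_\calQ\iota_\calQ\delta f=\calQ^2 f$ and $\iota_\calQ L_\calQ\delta f=-\iota_\calQ\delta(\calQ f)=-\calQ^2 f$, whence $[L_\calQ,\iota_\calQ]\delta f=2\calQ^2 f$. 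On the other hand $\iota_{[\calQ,\calQ]}\delta f=[\calQ,\calQ]f=2\calQ^2 f$, the last equality being precisely the graded self-bracket of the odd field $\calQ$. The two sides agree on generators, and therefore everywhere by the derivation property.

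The main obstacle is purely the sign bookkeeping: every bracket in the argument must be read as a graded commutator with the correct Koszul signs, and it is exactly the anticommutation $L_\calQ\delta=-\delta L_\calQ$ (rather than the naive commutation valid for even fields) that makes the two contributions add instead of cancel. A cleaner but equally sign-sensitive alternative would be the purely algebraic route: starting from $L_\calQ=[\iota_\calQ,\delta]$ and applying the graded Jacobi identity one computes $[\delta,[L_\calQ,\iota_\calQ]]=[L_\calQ,L_\calQ]=L_{[\calQ,\calQ]}=[\delta,\iota_{[\calQ,\calQ]}]$, so that $[\delta,\,[L_\calQ,\iota_\calQ]-\iota_{[\calQ,\calQ]}]=0$; concluding then requires the standard fact that an algebraic graded derivation of $\Omega^\bullet(\calF)$ annihilated by $[\delta,-]$ and strictly lowering form degree must vanish. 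I would present the generator-checking proof as the primary one, since it avoids that last structural input.
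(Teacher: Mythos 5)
Your proof is correct and takes essentially the same approach as the paper: both arguments check the identity of derivations on the generators of $\Omega^\bullet(\calF)$, namely functions (where both sides vanish) and exact one-forms. Your evaluation $[L_\calQ,\iota_\calQ]\delta f=2\calQ^2 f=\iota_{[\calQ,\calQ]}\delta f$, hinging on $L_\calQ\delta=-\delta L_\calQ$, is exactly the paper's computation $L_\calQ\iota_\calQ\delta f=\tfrac{1}{2}\iota_{[\calQ,\calQ]}\delta f$ and $\iota_\calQ L_\calQ\delta f=-\tfrac{1}{2}\iota_{[\calQ,\calQ]}\delta f$ written in terms of $\calQ^2$ instead of $L_{[\calQ,\calQ]}$.
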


\begin{proof}
    Since we are proving an equality of derivations, it is enough to check it on functions and on exact 1-forms.
    Note that for any function $f$ we get $[L_\calQ,\iota_\calQ]f=0=\iota_{[\calQ,\calQ]}f$. Hence, we get $[L_\calQ,\iota_\calQ]=\iota_{[\calQ,\calQ]}$ on functions. Moreover, we have
    \begin{multline*}
        L_\calQ\iota_\calQ\delta f=L_\calQ L_\calQ f-\frac{1}{2}L_{[\calQ,\calQ]}f=\frac{1}{2}\iota_{[\calQ,\calQ]}\delta f=\iota_\calQ L_\calQ \delta f\\
        =-\iota_\calQ\delta L_\calQ f=-L_\calQ L_\calQ f+\underbrace{\delta\iota_\calQ L_\calQ f}_{=0}=-\frac{1}{2}L_{[\calQ,\calQ]}f=-\frac{1}{2}\iota_{[\calQ,\calQ]}\delta f.
    \end{multline*}
    Thus, we get that $[L_\calQ,\iota_\calQ]=\iota_{[\calQ,\calQ]}$ on exact 1-forms. 
\end{proof}

\subsection{Classical Equivariant Setting}
In the presence of a boundary, the equation $\iota_\calQ\omega=\delta\calS$ is usually violated, so we define $\underline{\alpha}:=\iota_\calQ\omega-\delta\calS$ and $\underline{\omega}:=\delta\underline{\alpha}$. Then we have that $\mathrm{gh}(\underline{\alpha})=k$.

In general, $\calQ$ is not cohomological. However, the fundamental assumption for what we call boundary-compatible equivariant BV formalism  is that $[\calQ,\calQ]$ is Hamiltonian, i.e., we assume that \eqref{e:QQomegadeltaT} still holds. 

\begin{prop}
    We have that $\underline{\omega}=-L_\calQ\omega$.
\end{prop}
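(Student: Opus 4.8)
The plan is to compute $\underline{\omega}$ directly from its definition and reduce it to $-L_\calQ\omega$ using Cartan's magic formula together with the closedness of $\omega$. The whole argument is essentially a two-line manipulation, so the emphasis should be on getting the graded signs right rather than on any deep structural input.

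First I would unfold the definition. Recall that $\underline{\alpha}:=\iota_\calQ\omega-\delta\calS$ and $\underline{\omega}:=\delta\underline{\alpha}$. Applying $\delta$ and using $\delta^2=0$ to kill the exact term $\delta\calS$, one obtains immediately
\[
\underline{\omega}=\delta\underline{\alpha}=\delta\left(\iota_\calQ\omega-\delta\calS\right)=\delta\iota_\calQ\omega.
\]
This already isolates the relevant combination $\delta\iota_\calQ\omega$, with no reference to $\calS$ surviving.

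Next I would invoke the graded Cartan formula in the sign convention already fixed in the BV-BFV section of this paper, namely $L_\calQ=\iota_\calQ\delta-\delta\iota_\calQ$ (the minus sign being the consequence of $\calQ$ having odd degree $+1$). Since $\omega$ is a symplectic form it is in particular closed, $\delta\omega=0$, so $\iota_\calQ\delta\omega=0$ and therefore
\[
L_\calQ\omega=\iota_\calQ\delta\omega-\delta\iota_\calQ\omega=-\delta\iota_\calQ\omega.
\]
Combining this with the previous display gives $\underline{\omega}=\delta\iota_\calQ\omega=-L_\calQ\omega$, as claimed.

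The only point requiring genuine care — and hence the main obstacle, such as it is — is the bookkeeping of signs in the graded Cartan identity, since $\calQ$ is an odd vector field and a naive use of the ungraded formula $L_X=\iota_Xd+d\iota_X$ would flip the sign of the result. Once one commits to the convention $L_\calQ=[\iota_\calQ,\delta]$ used earlier in the manuscript (where $[\enspace,\enspace]$ is the graded commutator), everything is forced and the statement follows without any further computation. It is worth noting explicitly that this identity holds irrespective of whether $\calQ$ is cohomological, which is exactly why it will remain available in the equivariant setting where $[\calQ,\calQ]\neq 0$.
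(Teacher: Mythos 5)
Your proof is correct and follows exactly the same route as the paper's: unfold $\underline{\omega}=\delta\underline{\alpha}=\delta\iota_\calQ\omega$ (killing $\delta\calS$ via $\delta^2=0$), then apply the convention $L_\calQ=\iota_\calQ\delta-\delta\iota_\calQ$ together with $\delta\omega=0$. Your additional remarks on sign bookkeeping and on the identity being independent of $[\calQ,\calQ]=0$ are accurate but not needed beyond what the paper's one-line computation already contains.
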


\begin{proof}
    Note that 
    \[
    \underline{\omega}=\delta\iota_\calQ\omega=(\delta\iota_\calQ-\iota_\calQ\delta)\omega=-L_\calQ\omega.
    \]
\end{proof}

\begin{lem}\label{lem:Lemma_1}
    We have that $L_\calQ\underline{\omega}=0$.
\end{lem}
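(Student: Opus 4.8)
The plan is to combine the Proposition just proved with the defining condition of a weak BV manifold. By that Proposition we have $\underline{\omega}=-L_\calQ\omega$, so
\[
L_\calQ\underline{\omega}=-L_\calQ L_\calQ\omega=-(L_\calQ)^2\omega.
\]
The entire content of the lemma is therefore to re-express the iterated Lie derivative $(L_\calQ)^2$ in terms of a single Lie derivative along the graded self-bracket $[\calQ,\calQ]$, since the weak BV condition supplies exactly $L_{[\calQ,\calQ]}\omega=0$.

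To carry this out I would invoke that the assignment $X\mapsto L_X$ is a morphism of graded Lie algebras from vector fields to operators, so that $L_{[\calQ,\calQ]}=[L_\calQ,L_\calQ]$ with the graded commutator on the right. Because $\calQ$ has (odd) degree $+1$, the graded commutator of the odd operator $L_\calQ$ with itself is
\[
[L_\calQ,L_\calQ]=L_\calQ L_\calQ-(-1)^{1\cdot 1}L_\calQ L_\calQ=2(L_\calQ)^2,
\]
whence $(L_\calQ)^2=\tfrac12 L_{[\calQ,\calQ]}$. Substituting this into the display above gives $L_\calQ\underline{\omega}=-\tfrac12 L_{[\calQ,\calQ]}\omega$, and this vanishes precisely by the weak BV assumption $L_{[\calQ,\calQ]}\omega=0$. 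This closes the argument.

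The only delicate point, and the one I would be most careful about, is the sign and factor in the graded commutator for the odd vector field $\calQ$: the reason $(L_\calQ)^2$ does not vanish on its own (as it would for an even, cohomological field) is exactly that $\calQ$ is odd, so the graded commutator doubles rather than cancels. Everything else is formal bookkeeping, and one could alternatively route the computation through $\underline{\omega}=\delta\underline{\alpha}$ and the commutation of $L_\calQ$ with $\delta$, but the direct argument above is the cleanest.
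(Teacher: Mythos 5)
Your argument is correct and coincides with the paper's own proof: both reduce $L_\calQ\underline{\omega}$ to $-(L_\calQ)^2\omega$ via $\underline{\omega}=-L_\calQ\omega$, identify $(L_\calQ)^2=\tfrac12[L_\calQ,L_\calQ]=\tfrac12 L_{[\calQ,\calQ]}$ using the oddness of $\calQ$, and conclude by the weak BV condition $L_{[\calQ,\calQ]}\omega=0$. Your extra care about the factor of $2$ in the graded self-commutator is exactly the point the paper's one-line computation relies on.
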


\begin{proof}
    Note that
    \[
    L_\calQ\underline{\omega}=-L_\calQ L_\calQ\omega=-\frac{1}{2}[L_\calQ,L_\calQ]\omega=-\frac{1}{2}L_{[\calQ,\calQ]}\omega=0.
    \]
\end{proof}

\begin{lem}
    If $k\not=-1$, there exists a unique $\underline{\calS}$ such that $\iota_\calQ\underline{\omega}=\delta\underline{\calS}$.
\end{lem}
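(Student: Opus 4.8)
The plan is to show that $\iota_\calQ\underline{\omega}$ is a $\delta$-closed $1$-form of definite ghost number, and then to manufacture its unique primitive by contracting with the Euler vector field, exactly as in the earlier proof that $[\calQ,\calQ]$ is Hamiltonian.

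First I would establish closedness. Using the sign convention $L_\calQ=\iota_\calQ\delta-\delta\iota_\calQ$ together with $\delta\underline{\omega}=\delta\delta\underline{\alpha}=0$, one gets
\[
\delta\iota_\calQ\underline{\omega}=\iota_\calQ\delta\underline{\omega}-L_\calQ\underline{\omega}=-L_\calQ\underline{\omega},
\]
which vanishes by Lemma~\ref{lem:Lemma_1}. Hence $\iota_\calQ\underline{\omega}$ is $\delta$-closed. This is the step that quietly uses the standing assumption \eqref{e:QQomegadeltaT} (through Lemmas~\ref{lem:relation_1} and~\ref{lem:Lemma_1}), so it is the point where the boundary-compatibility hypothesis really enters.

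Next I would track ghost numbers. Since $\mathrm{gh}(\underline{\alpha})=k$ and $\delta$ preserves ghost number, $\underline{\omega}=\delta\underline{\alpha}$ has ghost number $k$, and therefore $\iota_\calQ\underline{\omega}$ has ghost number $1+k=k+1$. Let $E$ be the Euler vector field, so that $L_E\beta=(\mathrm{gh}\,\beta)\beta$ on any homogeneous form $\beta$ and $L_E=\delta\iota_E+\iota_E\delta$. Applying this to the closed form $\beta=\iota_\calQ\underline{\omega}$ yields
\[
\delta\big(\iota_E\iota_\calQ\underline{\omega}\big)=L_E\iota_\calQ\underline{\omega}=(k+1)\,\iota_\calQ\underline{\omega}.
\]
Provided $k\neq-1$, I may divide by $k+1$ and set $\underline{\calS}:=\tfrac{1}{k+1}\,\iota_E\iota_\calQ\underline{\omega}$, which then satisfies $\iota_\calQ\underline{\omega}=\delta\underline{\calS}$; this gives existence. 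Note that $\underline{\calS}$ is automatically homogeneous of ghost number $k+1$, since $\iota_E$ preserves ghost number.

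For uniqueness, any two primitives differ by a $\delta$-closed function, i.e.\ a (locally) constant function. Among homogeneous primitives of ghost number $k+1$, such a constant has ghost number $k+1\neq0$ and must therefore vanish, so $\underline{\calS}$ is unique. I expect no serious obstacle here: the whole argument is a degree-weighted Poincar\'e-type homotopy mirroring the earlier Hamiltonicity lemma, and the only thing to watch is the bookkeeping that pins the ghost number of $\iota_\calQ\underline{\omega}$ to exactly $k+1$, which is exactly why the hypothesis $k\neq-1$ is needed (at $k=-1$ the Euler operator annihilates $\iota_\calQ\underline{\omega}$ and the construction breaks down).
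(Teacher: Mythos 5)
Your proof is correct and follows essentially the same route as the paper: contract the closed one-form $\iota_\calQ\underline{\omega}$ with the Euler vector field $E$ and divide by $k+1$ to produce the primitive $\underline{\calS}=\frac{1}{k+1}\iota_E\iota_\calQ\underline{\omega}$. You merely make explicit two steps the paper leaves implicit, namely the $\delta$-closedness of $\iota_\calQ\underline{\omega}$ (via $L_\calQ\underline{\omega}=0$) and the ghost-number argument for uniqueness, both of which are needed and correctly handled.
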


\begin{proof}
    Let $E$ be the Euler vector field and define $\calH:=\iota_E\iota_\calQ\underline{\omega}$. Then, we get
    \[
    \delta \calH=L_E\iota_\calQ\underline{\omega}=(k+1)\iota_\calQ.
    \]
    Hence, we have $\underline{\calS}=\frac{1}{k+1}\calH$ with $\mathrm{gh}(\underline{\calS})=k+1$.
\end{proof}

\begin{lem}
    We have 
    \[
    \delta L_\calQ\calS=\delta\left(2\underline{\calS}-\iota_\calQ\underline{\alpha}+2\calT\right).
    \]
\end{lem}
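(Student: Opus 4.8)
The plan is to imitate the derivation of the modified CME \eqref{eq:mCME} from the closed BV-BFV setting, but now retaining the non-cohomological terms that the boundary and the failure of $\calQ^2=0$ produce. The starting point is the defining relation $\iota_\calQ\omega=\delta\calS+\underline\alpha$ (i.e.\ the definition $\underline\alpha:=\iota_\calQ\omega-\delta\calS$), to which I apply the Lie derivative $L_\calQ$ and then compute the two sides independently, matching them at the end.

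For the left-hand side $L_\calQ\iota_\calQ\omega$ I would use Lemma~\ref{lem:relation_1} to commute $L_\calQ$ past $\iota_\calQ$, giving $L_\calQ\iota_\calQ\omega=\iota_\calQ L_\calQ\omega+\iota_{[\calQ,\calQ]}\omega$. The first term is controlled by the Proposition $\underline\omega=-L_\calQ\omega$ together with the defining relation $\iota_\calQ\underline\omega=\delta\underline\calS$, so that $\iota_\calQ L_\calQ\omega=-\iota_\calQ\underline\omega=-\delta\underline\calS$; the second term is precisely $-2\delta\calT$ by the boundary-compatibility assumption \eqref{e:QQomegadeltaT}. Hence the left-hand side collapses to $-\delta\underline\calS-2\delta\calT$.

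For the right-hand side $L_\calQ(\delta\calS+\underline\alpha)$ I treat the two summands with Cartan's formula $L_\calQ=\iota_\calQ\delta-\delta\iota_\calQ$. On the function $\calS$ this yields $L_\calQ\delta\calS=-\delta\iota_\calQ\delta\calS=-\delta L_\calQ\calS$ (using $\delta^2=0$ and $L_\calQ\calS=\iota_\calQ\delta\calS$), exactly the sign already exploited in the motivation of \eqref{e:QQomegadeltaT}. On $\underline\alpha$ it gives $L_\calQ\underline\alpha=\iota_\calQ\delta\underline\alpha-\delta\iota_\calQ\underline\alpha=\iota_\calQ\underline\omega-\delta\iota_\calQ\underline\alpha=\delta\underline\calS-\delta\iota_\calQ\underline\alpha$, since $\delta\underline\alpha=\underline\omega$ and $\iota_\calQ\underline\omega=\delta\underline\calS$. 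So the right-hand side equals $-\delta L_\calQ\calS+\delta\underline\calS-\delta\iota_\calQ\underline\alpha$. Equating the two expressions and solving for $\delta L_\calQ\calS$ produces $\delta L_\calQ\calS=2\delta\underline\calS-\delta\iota_\calQ\underline\alpha+2\delta\calT=\delta(2\underline\calS-\iota_\calQ\underline\alpha+2\calT)$, which is the assertion.

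I expect the only genuine obstacle to be the sign bookkeeping in the graded setting: one must consistently adopt the paper's convention $L_\calQ=\iota_\calQ\delta-\delta\iota_\calQ$, under which $L_\calQ$ \emph{anticommutes} with $\delta$ on functions (so $L_\calQ\delta\calS=-\delta L_\calQ\calS$), and apply Cartan's formula to $\underline\alpha$ with the matching sign; a slip here flips the overall sign of the result. Every remaining ingredient is a direct substitution of the Proposition $\underline\omega=-L_\calQ\omega$, the defining relation $\iota_\calQ\underline\omega=\delta\underline\calS$, Lemma~\ref{lem:relation_1}, and the assumption \eqref{e:QQomegadeltaT}, so no new analytic input is required.
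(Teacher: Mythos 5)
Your proof is correct and is essentially the paper's own argument in a different packaging: the paper unwinds the chain $\delta\underline{\calS}=\iota_\calQ\underline{\omega}=-\iota_\calQ L_\calQ\omega=\iota_{[\calQ,\calQ]}\omega-L_\calQ\iota_\calQ\omega=\dots$, which is exactly your "apply $L_\calQ$ to $\iota_\calQ\omega=\delta\calS+\underline{\alpha}$ and compute both sides" using the same ingredients (Lemma~\ref{lem:relation_1}, Cartan's formula, $\underline{\omega}=-L_\calQ\omega$, $\iota_\calQ\underline{\omega}=\delta\underline{\calS}$, and \eqref{e:QQomegadeltaT}). Your sign bookkeeping, in particular $L_\calQ\delta\calS=-\delta L_\calQ\calS$, matches the convention used in the paper's proof of this lemma, so no discrepancies arise.
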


\begin{proof}
    Note that we have
    \begin{multline*}
        \delta\underline{\calS}=\iota_\calQ \underline{\omega}=-\iota_\calQ L_\calQ\omega=-[\iota_\calQ,L_\calQ]\omega-L_\calQ\iota_\calQ\omega\\
        =\iota_{[\calQ,\calQ]}\omega-L_\calQ(\underline{\alpha}+\delta\calS)=\iota_{[\calQ,\calQ]}\omega-\iota_\calQ\delta\underline{\alpha}+\delta\iota_\calQ\underline{\alpha}+\delta L_\calQ \calS\\
        =\iota_{[\calQ,\calQ]}\omega-\iota_\calQ\underline{\omega}+\delta\iota_\calQ\underline{\alpha}+\delta L_\calQ\calS=-2\delta \calT+\delta \iota_{\calQ}\underline{\alpha}+\delta L_\calQ\calS-\delta\underline{\calS}.
    \end{multline*}
    Hence, we get that $2\delta \underline{\calS}=-2\delta \calT+\delta\iota_\calQ\underline{\alpha}+\delta L_\calQ\calS$.
\end{proof}

\begin{prop}
    If $k\not=-1$, we get that
    \[
    L_\calQ\calS=2\underline{\calS}-\iota_{\calQ}\underline{\alpha}+2\calT.
    \]
\end{prop}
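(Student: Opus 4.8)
The plan is to combine the immediately preceding lemma with an elementary homogeneity argument. That lemma establishes $\delta L_\calQ\calS = \delta\left(2\underline{\calS}-\iota_\calQ\underline{\alpha}+2\calT\right)$, so if I set $f:=L_\calQ\calS-2\underline{\calS}+\iota_\calQ\underline{\alpha}-2\calT$, then $\delta f=0$. All the work therefore reduces to showing that this $\delta$-closed function actually vanishes, and the hypothesis $k\neq-1$ is exactly what will allow this.

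First I would pin down the ghost number of $f$. Since $\mathrm{gh}(\calQ)=1$ and $\mathrm{gh}(\calS)=\mathrm{gh}(\omega)+1=k$, the Lie derivative $L_\calQ\calS$ has ghost number $k+1$. The earlier lemma guaranteeing the existence of $\underline{\calS}$ records $\mathrm{gh}(\underline{\calS})=k+1$; from $\mathrm{gh}(\underline{\alpha})=k$ one gets $\mathrm{gh}(\iota_\calQ\underline{\alpha})=k+1$; and reading off the defining relation $\tfrac12\iota_{[\calQ,\calQ]}\omega=-\delta\calT$ gives $\mathrm{gh}(\calT)=k+1$ as well. Hence every summand, and therefore $f$ itself, is homogeneous of ghost number $k+1$.

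The conclusion then follows from the Euler vector field $E$, exactly as in the proof of the existence of $\underline{\calS}$. For a function one has $\iota_E f=0$, so $L_E f=\iota_E\delta f+\delta\iota_E f=\iota_E\delta f$; since $\delta f=0$ this forces $L_E f=0$. On the other hand, homogeneity gives $L_E f=(k+1)f$. Comparing the two, $(k+1)f=0$, so $f=0$ precisely when $k\neq-1$, which is the claimed identity $L_\calQ\calS=2\underline{\calS}-\iota_\calQ\underline{\alpha}+2\calT$. The only point requiring care is the ghost-number bookkeeping that makes $f$ genuinely homogeneous, so that the Euler-field scaling applies; once that is verified the vanishing is immediate, and the role of $k\neq-1$ is transparent.
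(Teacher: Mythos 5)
Your proof is correct and follows essentially the same route as the paper: the preceding lemma gives $\delta\bigl(L_\calQ\calS-2\underline{\calS}+\iota_\calQ\underline{\alpha}-2\calT\bigr)=0$, and the paper's one-line proof invokes exactly the fact that this $\delta$-closed quantity has ghost number $k+1\neq 0$ and must therefore vanish. Your Euler-vector-field computation $L_E f=\iota_E\delta f=0$ versus $L_E f=(k+1)f$ is just the explicit version of that degree argument, the same device the paper itself uses in the existence lemma for $\underline{\calS}$.
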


\begin{proof}
This follows from the fact that $\mathrm{gh}(L_\calQ\calS)=k+1$.
\end{proof}

\begin{lem}
    We have that
    \[
    \delta\left(\frac{1}{2}\iota_\calQ\iota_\calQ\omega\right)=\delta\left(\underline{\calS}+\calT\right).
    \]
\end{lem}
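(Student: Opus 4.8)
The plan is to unwind $\frac{1}{2}\iota_\calQ\iota_\calQ\omega$ using the very definition $\underline{\alpha}=\iota_\calQ\omega-\delta\calS$ and then feed in the $\delta$-level identity established in the previous Lemma, namely $\delta L_\calQ\calS=\delta\left(2\underline{\calS}-\iota_\calQ\underline{\alpha}+2\calT\right)$. Working entirely inside $\delta$ is the key organizing choice: it lets the computation go through for every degree $k$, in contrast to the sharper Proposition $L_\calQ\calS=2\underline{\calS}-\iota_\calQ\underline{\alpha}+2\calT$, which is valid only for $k\neq-1$ (and which, together with the same substitution, would even give the stronger un-$\delta$'d equality $\frac12\iota_\calQ\iota_\calQ\omega=\underline{\calS}+\calT$).

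First I would substitute $\iota_\calQ\omega=\underline{\alpha}+\delta\calS$ into the inner contraction, so that $\iota_\calQ\iota_\calQ\omega=\iota_\calQ\underline{\alpha}+\iota_\calQ\delta\calS$. Since $\calS$ is a function, $\iota_\calQ\calS=0$, and the Cartan formula $L_\calQ=\iota_\calQ\delta-\delta\iota_\calQ$ recalled in the text gives $\iota_\calQ\delta\calS=L_\calQ\calS$. Hence $\iota_\calQ\iota_\calQ\omega=\iota_\calQ\underline{\alpha}+L_\calQ\calS$. Applying $\delta$ to one half of this expression yields $\delta\!\left(\tfrac12\iota_\calQ\iota_\calQ\omega\right)=\tfrac12\delta\iota_\calQ\underline{\alpha}+\tfrac12\delta L_\calQ\calS$, and then inserting $\delta L_\calQ\calS=2\delta\underline{\calS}-\delta\iota_\calQ\underline{\alpha}+2\delta\calT$ makes the two $\delta\iota_\calQ\underline{\alpha}$ contributions cancel, leaving exactly $\delta\underline{\calS}+\delta\calT=\delta\left(\underline{\calS}+\calT\right)$.

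There is no genuine obstacle; the statement is a one-line substitution once the right identity is cited. The only points demanding care are the graded sign conventions, specifically confirming $\iota_\calQ\delta\calS=L_\calQ\calS$ from the stated form of Cartan's formula, and verifying that the $\iota_\calQ\underline{\alpha}$ terms enter with opposite signs so that they cancel rather than reinforce. The conceptual emphasis I would make explicit is that remaining at the level of $\delta$ is what renders the result unconditional in $k$, covering even the excluded value $k=-1$ where $\underline{\calS}$ need not be well defined as a genuine function.
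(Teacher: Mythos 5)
Your proof is correct and essentially identical to the paper's: both substitute $\iota_\calQ\omega=\underline{\alpha}+\delta\calS$ into $\iota_\calQ\iota_\calQ\omega$, identify $\iota_\calQ\delta\calS=L_\calQ\calS$ via Cartan's formula, and insert the preceding lemma $\delta L_\calQ\calS=\delta\left(2\underline{\calS}-\iota_\calQ\underline{\alpha}+2\calT\right)$ so that the two $\delta\iota_\calQ\underline{\alpha}$ contributions cancel. One minor caveat: your closing claim that working under $\delta$ covers the case $k=-1$ is moot, since the statement (and the lemma you cite) already presupposes that $\underline{\calS}$ exists, which the paper only guarantees for $k\neq-1$.
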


\begin{proof}
    Note that we have
    \begin{multline*}
        \delta\iota_\calQ\iota_\calQ\omega=\delta\iota_\calQ\left(\underline{\alpha}+\delta\calS\right)=\delta\iota_\calQ\underline{\alpha}+\delta L_\calQ\calS\\
        =\delta\iota_\calQ\underline{\alpha}+\delta\left(2\underline{\calS}-\iota_\calQ\underline{\alpha}+2\calT\right)=2\delta \underline{\calS}+2\delta \calT
    \end{multline*}
\end{proof}

\begin{thm}
    If $k\not=-1$, we get that
    \[
    \frac{1}{2}\iota_\calQ\iota_\calQ\omega=\underline{\calS}+\calT.
    \]
\end{thm}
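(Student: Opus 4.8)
The plan is to deduce the statement directly from the immediately preceding lemma by a pure ghost-number argument, in the same spirit as the earlier Proposition that extracted $L_\calQ\calS$ from $\delta L_\calQ\calS$. First I would invoke that lemma, namely
\[
\delta\left(\frac{1}{2}\iota_\calQ\iota_\calQ\omega\right)=\delta\left(\underline{\calS}+\calT\right),
\]
and rewrite it as $\delta\big(\tfrac12\iota_\calQ\iota_\calQ\omega-\underline{\calS}-\calT\big)=0$. Setting $G:=\tfrac12\iota_\calQ\iota_\calQ\omega-\underline{\calS}-\calT$, this says that $G$ is a $\delta$-closed function; since $\delta$ is the de Rham differential on $\calF$, a $\delta$-closed object of form degree $0$ is locally constant.

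Next I would carry out the degree bookkeeping to identify the ghost number of $G$. As $\mathrm{gh}(\calQ)=1$ and $\mathrm{gh}(\omega)=k-1$, the double contraction gives $\mathrm{gh}\big(\iota_\calQ\iota_\calQ\omega\big)=2+(k-1)=k+1$. By the lemma producing $\underline{\calS}$ we already have $\mathrm{gh}(\underline{\calS})=k+1$, and the fundamental assumption $\tfrac12\iota_{[\calQ,\calQ]}\omega=-\delta\calT$ from \eqref{e:QQomegadeltaT} forces $\mathrm{gh}(\calT)=k+1$ as well (the left-hand side has ghost number $2+(k-1)=k+1$ and $\delta$ preserves ghost number; for $k=0$ this recovers the earlier normalization $\mathrm{gh}(\calT)=1$). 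Hence all three summands share the ghost number $k+1$, so $\mathrm{gh}(G)=k+1$.

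The finishing step is the standard observation that a constant function has ghost number $0$, so any locally constant function of nonzero homogeneous ghost number must vanish identically. Because the hypothesis $k\neq-1$ gives $\mathrm{gh}(G)=k+1\neq0$, we conclude $G=0$, i.e.
\[
\frac{1}{2}\iota_\calQ\iota_\calQ\omega=\underline{\calS}+\calT,
\]
as claimed.

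I do not expect any genuine obstacle here: the substantive content has already been spent in the preceding lemma (and, upstream, in the assumption \eqref{e:QQomegadeltaT} and the Proposition computing $L_\calQ\calS$), leaving only a one-line degree argument. The only point deserving care is precisely the ghost-number homogeneity: one must confirm that $\tfrac12\iota_\calQ\iota_\calQ\omega$, $\underline{\calS}$, and $\calT$ all sit in degree $k+1$, so that the condition $k\neq-1$ genuinely kills the locally constant difference rather than merely constraining it. It is worth remarking that $k\neq-1$ thus enters twice: once in the existence lemma that defines $\underline{\calS}$, and once again here to annihilate the residual constant.
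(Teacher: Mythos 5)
Your proof is correct and takes essentially the same approach as the paper: the paper likewise deduces the theorem from the preceding lemma via the observation that the $\delta$-closed difference must be constant, and that a constant of ghost number $k+1\neq 0$ vanishes (its proof is literally the one-liner ``this follows from the fact that $\mathrm{gh}(\underline{\calS})=k+1$'', echoing the argument spelled out earlier for the modified CME \eqref{eq:mCME}). Your explicit ghost-number bookkeeping for $\iota_\calQ\iota_\calQ\omega$, $\underline{\calS}$, and $\calT$ simply makes precise what the paper leaves implicit.
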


\begin{proof}
    This follows from the fact that $\mathrm{gh}(\underline{\calS})=k+1$.
\end{proof}

\begin{rem}
    We will use this formula as
    \[
    \frac{1}{2}\iota_\calQ\iota_\calQ\omega-\calT=\underline{\calS},
    \]
    which shows that the modification of the equivariant CME $\frac{1}{2}\iota_\calQ\iota_\calQ\omega=\calT$ is the boundary term $\underline{\calS}$.
\end{rem}

\begin{prop}
    We have that
    \[
    \frac{1}{2}L_\calQ\underline{\calS}=-L_\calQ \calT.
    \]
\end{prop}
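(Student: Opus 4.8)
The plan is to differentiate the identity $\tfrac12\iota_\calQ\iota_\calQ\omega=\underline{\calS}+\calT$ (the Theorem just proved) along $\calQ$ and to compute the left-hand side independently by pushing $L_\calQ$ through the two contractions. Applying $L_\calQ$ to both sides gives immediately
\[
\tfrac12 L_\calQ\iota_\calQ\iota_\calQ\omega = L_\calQ\underline{\calS}+L_\calQ\calT,
\]
so everything reduces to a second, direct evaluation of $L_\calQ\iota_\calQ\iota_\calQ\omega$.

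For that second evaluation I would repeatedly use Lemma~\ref{lem:relation_1} in the form $L_\calQ\iota_\calQ=\iota_\calQ L_\calQ+\iota_{[\calQ,\calQ]}$, together with the three facts already available: $L_\calQ\omega=-\underline{\omega}$ (the Proposition), $\iota_\calQ\underline{\omega}=\delta\underline{\calS}$ (the existence lemma for $\underline{\calS}$), and the standing assumption \eqref{e:QQomegadeltaT}, namely $\iota_{[\calQ,\calQ]}\omega=-2\delta\calT$. One also needs the elementary identity $\iota_\calQ\delta f=L_\calQ f$ for a function $f$, valid since $\iota_\calQ f=0$. Carrying out the two commutations, the inner step gives $L_\calQ\iota_\calQ\omega=\iota_\calQ L_\calQ\omega+\iota_{[\calQ,\calQ]}\omega=-\delta\underline{\calS}-2\delta\calT$, and the outer step then assembles $L_\calQ\iota_\calQ\iota_\calQ\omega$ out of the term $\iota_\calQ L_\calQ\iota_\calQ\omega=-L_\calQ\underline{\calS}-2L_\calQ\calT$ and the cross term $\iota_{[\calQ,\calQ]}\iota_\calQ\omega$.

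The step I expect to be decisive --- and the main place to be careful --- is the treatment of this cross term $\iota_{[\calQ,\calQ]}\iota_\calQ\omega$. Because $[\calQ,\calQ]$ is \emph{even} whereas $\calQ$ is odd, the graded Koszul sign governing the exchange of the two interior products is $+1$, so that $\iota_{[\calQ,\calQ]}\iota_\calQ=\iota_\calQ\iota_{[\calQ,\calQ]}$: the contractions \emph{commute} rather than anticommute. Hence the cross term equals $\iota_\calQ\iota_{[\calQ,\calQ]}\omega=-2L_\calQ\calT$ and \emph{adds to} the other contribution rather than cancelling it. Had one transplanted the ungraded rule that interior products anticommute, the two contributions would cancel and the numerical coefficient would come out wrong; obtaining the factor $\tfrac12$ in the statement hinges precisely on this sign. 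With the commuting convention one finds $L_\calQ\iota_\calQ\iota_\calQ\omega=-L_\calQ\underline{\calS}-4L_\calQ\calT$.

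Finally I would equate the two evaluations. Using $\iota_\calQ\iota_\calQ\omega=2\underline{\calS}+2\calT$ on the left gives $L_\calQ\iota_\calQ\iota_\calQ\omega=2L_\calQ\underline{\calS}+2L_\calQ\calT$, so that
\[
2L_\calQ\underline{\calS}+2L_\calQ\calT = -L_\calQ\underline{\calS}-4L_\calQ\calT,
\]
whence $3L_\calQ\underline{\calS}=-6L_\calQ\calT$ and therefore $\tfrac12 L_\calQ\underline{\calS}=-L_\calQ\calT$, as claimed. As a consistency check one may instead compute $L_\calQ\underline{\calS}=\iota_\calQ\iota_\calQ\underline{\omega}=-\iota_\calQ\iota_\calQ L_\calQ\omega$ and $L_\calQ\calT=-\tfrac12\iota_\calQ\iota_{[\calQ,\calQ]}\omega$ directly; this alternative route uses the same commutation facts and the same crucial sign and yields the identical relation.
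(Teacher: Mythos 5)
Your proof is correct and is essentially the paper's own argument read in the opposite direction: the paper starts from $L_\calQ\underline{\calS}=\iota_\calQ\iota_\calQ\underline{\omega}=-\iota_\calQ\iota_\calQ L_\calQ\omega$ and commutes $L_\calQ$ outward, whereas you apply $L_\calQ$ to the theorem $\tfrac12\iota_\calQ\iota_\calQ\omega=\underline{\calS}+\calT$ and commute it inward, but both computations rest on exactly the same ingredients---Lemma~\ref{lem:relation_1}, \eqref{e:QQomegadeltaT}, $\underline{\omega}=-L_\calQ\omega$, $\iota_\calQ\underline{\omega}=\delta\underline{\calS}$, and the commutation $\iota_{[\calQ,\calQ]}\iota_\calQ=\iota_\calQ\iota_{[\calQ,\calQ]}$---and both end at the same linear relation $3L_\calQ\underline{\calS}=-6L_\calQ\calT$. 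Your sign analysis of the cross term is precisely the step the paper uses implicitly when it replaces $\iota_{[\calQ,\calQ]}\iota_\calQ\omega$ by $-2\iota_\calQ\delta\calT$.
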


\begin{proof}
    Note that we have
    \begin{multline*}
        L_\calQ\underline{\calS}=\iota_\calQ\delta\underline{\calS}=\iota_\calQ\iota_\calQ\underline{\omega}=-\iota_\calQ\iota_\calQ L_\calQ\omega=-\iota_\calQ [\iota_\calQ,L_\calQ]\omega-\iota_\calQ L_\calQ\iota_\calQ\omega\\
        =\iota_\calQ\iota_{[\calQ,\calQ]}\omega-[\iota_\calQ,L_\calQ]\iota_\calQ\omega-L_\calQ\iota_\calQ\iota_\calQ\omega=-2\iota_\calQ\delta \calT+\iota_{[\calQ,\calQ]}\iota_\calQ\omega-L_\calQ\left(2\underline{\calS}+2\calT\right)\\
        =-2\iota_\calQ\delta \calT-2\iota_\calQ\delta \calT-L_\calQ \left(2\underline{\calS}+2\calT\right)=-2L_\calQ \calT-2L_\calQ \calT-2L_\calQ\underline{\calS}-2L_\calQ \calT.
    \end{multline*}
    Hence, we get that
    \[
    L_\calQ\underline{\calS}=-2L_\calQ \calT.
    \]
\end{proof}

\begin{defn}
    We define $\underline{\calT}:=-L_\calQ \calT$, which implies the equation
    \begin{equation}\label{eq:boundary_BFV_eq_CME}
    \frac{1}{2}L_\calQ\underline{\calS}=\underline{\calT}.
    \end{equation}
    We call this the \emph{boundary BFV equivariant CME}.
\end{defn}

\begin{rem}
Using the boundary Poisson bracket, we can also rewrite \eqref{eq:boundary_BFV_eq_CME} as\footnote{$\underline{\omega}$ is actually degenerate, but this bracket is well-defined because $\underline{\calS}$ possesses a Hamiltonian vector field.} 
\[
\frac{1}{2}\{\underline{\calS},\underline{\calS}\}=\underline{\calT}.
\]
\end{rem}

\begin{lem}
    We have that
    \[
    \frac{1}{2}\iota_{[\calQ,\calQ]}\underline{\omega}=-\delta \underline{\calT}.
    \]
\end{lem}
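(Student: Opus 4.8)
The final statement is precisely the ``underlined'' (boundary) analog of equation~\eqref{e:QQomegadeltaT}, so the plan is to replay that bulk derivation verbatim with the triple $(\omega,\calS,\calT)$ replaced by its boundary counterpart $(\underline{\omega},\underline{\calS},\underline{\calT})$. The structural inputs I would use are all already available: Lemma~\ref{lem:relation_1} giving $[L_\calQ,\iota_\calQ]=\iota_{[\calQ,\calQ]}$, Lemma~\ref{lem:Lemma_1} giving $L_\calQ\underline{\omega}=0$, the defining relation $\iota_\calQ\underline{\omega}=\delta\underline{\calS}$, and the boundary BFV equivariant CME~\eqref{eq:boundary_BFV_eq_CME} in the form $\tfrac12 L_\calQ\underline{\calS}=\underline{\calT}$.

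Concretely, I would first rewrite the left-hand side via Lemma~\ref{lem:relation_1} as
\[
\frac{1}{2}\iota_{[\calQ,\calQ]}\underline{\omega}=\frac{1}{2}\bigl(L_\calQ\iota_\calQ\underline{\omega}-\iota_\calQ L_\calQ\underline{\omega}\bigr),
\]
then discard the second term using Lemma~\ref{lem:Lemma_1}. Substituting $\iota_\calQ\underline{\omega}=\delta\underline{\calS}$ into the surviving term leaves $\tfrac12 L_\calQ\delta\underline{\calS}$. The only genuinely sign-sensitive move is to commute $L_\calQ$ past $\delta$: since $L_\calQ=\iota_\calQ\delta-\delta\iota_\calQ$ and $\delta^2=0$, one has $L_\calQ\delta=-\delta L_\calQ$ (the minus sign being forced by the oddness of $\calQ$, exactly as in the bulk computation of~\eqref{e:QQomegadeltaT}). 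Applying this and then invoking~\eqref{eq:boundary_BFV_eq_CME} yields
\[
\frac{1}{2}L_\calQ\delta\underline{\calS}=-\frac{1}{2}\delta L_\calQ\underline{\calS}=-\delta\Bigl(\frac{1}{2}L_\calQ\underline{\calS}\Bigr)=-\delta\underline{\calT},
\]
which is the desired identity.

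I do not expect a genuine obstacle, as the argument is entirely formal; the single point that demands care is the anticommutation $L_\calQ\delta=-\delta L_\calQ$, which is exactly what distinguishes the correct conclusion $-\delta\underline{\calT}$ from a spurious $+\delta\underline{\calT}$. Everything else is bookkeeping already carried out for the bulk analog.
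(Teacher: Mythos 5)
Your proposal is correct and is essentially identical to the paper's own proof: the same chain $\frac{1}{2}\iota_{[\calQ,\calQ]}\underline{\omega}=\frac{1}{2}[L_\calQ,\iota_\calQ]\underline{\omega}=\frac{1}{2}L_\calQ\iota_\calQ\underline{\omega}=\frac{1}{2}L_\calQ\delta\underline{\calS}=-\frac{1}{2}\delta L_\calQ\underline{\calS}=-\delta\underline{\calT}$, invoking Lemma~\ref{lem:relation_1}, Lemma~\ref{lem:Lemma_1}, the relation $\iota_\calQ\underline{\omega}=\delta\underline{\calS}$, and Equation~\eqref{eq:boundary_BFV_eq_CME} in exactly the same order. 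Your explicit justification of the anticommutation $L_\calQ\delta=-\delta L_\calQ$ (which the paper uses silently) is a welcome addition, and the sign bookkeeping is right.
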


\begin{proof}
    Note that we have
    \begin{multline*}
        \frac{1}{2}\iota_{[\calQ,\calQ]}\underline{\omega}=\frac{1}{2}[L_\calQ,\iota_\calQ]\underline{\omega}=\frac{1}{2}L_\calQ\iota_\calQ\underline{\omega}-\underbrace{\frac{1}{2}\iota_\calQ L_\calQ\underline{\omega}}_{=0}\\
        =\frac{1}{2}L_\calQ \delta\underline{\calS}=-\frac{1}{2}\delta L_\calQ \underline{\calS}=-\delta \underline{\calT}.
    \end{multline*}
\end{proof}

\subsection{Reduction to the Boundary} 
We now want to turn  to the construction of the symplectic form on the boundary fields. 

\begin{defn}
    We define the kernel of $\underline{\omega}$ as
    \[
    \ker\underline{\omega}:=\{Y\in\mathfrak{X}(\calF)\mid \iota_Y\underline{\omega}=0\}.
    \]
\end{defn}

\begin{lem}
    The kernel of $\underline{\omega}$ is involutive.
\end{lem}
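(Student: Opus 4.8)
The plan is to exploit that $\underline{\omega}$ is exact, hence $\delta$-closed, and to invoke the standard fact that the kernel distribution of a closed two-form is closed under the Lie bracket. Concretely, I would fix two vector fields $X,Y\in\ker\underline{\omega}$, so that $\iota_X\underline{\omega}=\iota_Y\underline{\omega}=0$, and reduce the claim to showing $\iota_{[X,Y]}\underline{\omega}=0$, which is precisely the statement that $[X,Y]\in\ker\underline{\omega}$.

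The key tool is the general-vector-field analogue of Lemma~\ref{lem:relation_1}, namely the graded Cartan identity $[L_X,\iota_Y]=\iota_{[X,Y]}$. Applying it to $\underline{\omega}$ gives
\[
\iota_{[X,Y]}\underline{\omega}=L_X\iota_Y\underline{\omega}\pm\iota_Y L_X\underline{\omega}.
\]
I would then dispatch the two terms separately. The first vanishes immediately because $\iota_Y\underline{\omega}=0$. For the second, Cartan's magic formula gives $L_X\underline{\omega}=\iota_X\delta\underline{\omega}+\delta\iota_X\underline{\omega}$; here $\delta\underline{\omega}=\delta\delta\underline{\alpha}=0$ by exactness, and $\iota_X\underline{\omega}=0$ by hypothesis, so $L_X\underline{\omega}=0$ and the second term vanishes as well. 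Hence $\iota_{[X,Y]}\underline{\omega}=0$, which establishes involutivity.

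The only point to watch is the graded sign in the commutator $[L_X,\iota_Y]$, which depends on the parities and ghost degrees of $X$ and $Y$; since both terms on the right-hand side vanish identically, this sign is immaterial and no case distinction is needed. The main conceptual caveat, rather than a computational obstacle, is that to speak of $\ker\underline{\omega}$ as an involutive distribution in the Frobenius sense one should assume it has locally constant rank, so that it is a genuine subbundle integrating to a foliation. In any case the bracket computation shows that $\ker\underline{\omega}$ is a Lie subalgebra of $\mathfrak{X}(\calF)$, which is exactly what is needed to set up the subsequent reduction to the boundary fields.
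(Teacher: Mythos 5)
Your proof is correct and is essentially the same argument as the paper's: the paper disposes of the lemma in one line, ``this follows immediately since $\delta\underline{\omega}=0$,'' and your computation via $[L_X,\iota_Y]=\iota_{[X,Y]}$ together with Cartan's formula is exactly the standard expansion of that claim. Your closing remark about constant rank is a sensible caveat, consistent with the paper's subsequent assumption that the leaf space $\calF/\calY$ is smooth.
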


\begin{proof}
    This follows immediately since $\delta\underline{\omega}=0$.
\end{proof}

Assume now that the leaf space of the foliation, i.e., the quotient space $\calF^\de:=\calF/\calY$, is smooth. Moreover, if we denote by $\pi\colon \calF\to \calF^\de$ the projection, then, for a uniquely determined symplectic form $\omega^\de$, we have $\underline{\omega}=\pi^*\omega^\de$.

\begin{lem}
    The vector field $\calQ$ is projectable.
\end{lem}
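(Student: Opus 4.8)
The plan is to use the infinitesimal characterization of projectability: the vector field $\calQ$ descends along the projection $\pi\colon\calF\to\calF^\de$ precisely when the vertical distribution $\ker\dd\pi=\ker\underline{\omega}$ is preserved by $\calQ$, i.e.\ $[\calQ,Y]\in\ker\underline{\omega}$ whenever $Y\in\ker\underline{\omega}$. Since $\ker\underline{\omega}$ has already been shown to be involutive and to integrate to the foliation $\calY$ whose leaf space is $\calF^\de=\calF/\calY$, establishing this invariance is exactly what is needed for $\calQ$ to pass to the quotient.

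First I would record the graded Cartan-calculus identity generalizing Lemma~\ref{lem:relation_1}, namely $[L_\calQ,\iota_Y]=\iota_{[\calQ,Y]}$ for an arbitrary vector field $Y\in\mathfrak{X}(\calF)$; the proof is identical to that of Lemma~\ref{lem:relation_1}, checking the equality of derivations on functions and on exact $1$-forms, now with $Y$ in the second slot. Then, for any $Y\in\ker\underline{\omega}$, I would compute
\[
\iota_{[\calQ,Y]}\underline{\omega}=[L_\calQ,\iota_Y]\underline{\omega}=L_\calQ\big(\iota_Y\underline{\omega}\big)\pm\iota_Y\big(L_\calQ\underline{\omega}\big).
\]
The first term vanishes because $\iota_Y\underline{\omega}=0$ by the definition of $\ker\underline{\omega}$, and the second vanishes by Lemma~\ref{lem:Lemma_1}, which states $L_\calQ\underline{\omega}=0$. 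Hence $\iota_{[\calQ,Y]}\underline{\omega}=0$, that is $[\calQ,Y]\in\ker\underline{\omega}$, which is precisely the required invariance of the vertical distribution.

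Finally I would conclude that, because the vertical distribution $\ker\dd\pi=\ker\underline{\omega}$ is $\calQ$-invariant and $\pi$ realizes $\calF^\de$ as the quotient by the associated foliation $\calY$, the field $\calQ$ is projectable; writing $\calQ^\de$ for the projected field, this amounts to $\delta\pi\,\calQ=\calQ^\de$, in parallel with condition $(2)$ of the BV-BFV manifold.

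The main obstacle I anticipate is of a bookkeeping rather than conceptual nature: one must be careful with the graded signs appearing in the commutator identity (as $\calQ$ is odd and $Y$ need not be homogeneous) and with the precise meaning of projectability in the graded, infinite-dimensional setting. However, because both terms in the displayed computation vanish identically, the sign in front of the second term is immaterial to the conclusion, so the argument is robust and the only genuine inputs are the generalized bracket identity and Lemma~\ref{lem:Lemma_1}.
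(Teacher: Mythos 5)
Your proof is correct and is essentially identical to the paper's: both show $[\calQ,Y]\in\ker\underline{\omega}$ for $Y\in\ker\underline{\omega}$ via the graded Cartan identity $[L_\calQ,\iota_Y]=\iota_{[\calQ,Y]}$, with both resulting terms vanishing by the definition of $\ker\underline{\omega}$ and Lemma~\ref{lem:Lemma_1}. The only difference is cosmetic: you make explicit the generalization of Lemma~\ref{lem:relation_1} to distinct vector fields, which the paper uses tacitly (hiding the signs in a $\pm$), and as you note the signs are immaterial since each term vanishes identically.
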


\begin{proof}
    Let $Y\in\ker\underline{\omega}$. Then we get
    \[
        \iota_{[\calQ,Y]}\underline{\omega}=\pm [L_\calQ,\iota_Y]\underline{\omega}=\pm \underbrace{L_\calQ\iota_Y\underline{\omega}}_{=0}\pm \underbrace{\iota_Y L_\calQ\underline{\omega}}_{=0}=0.
    \]
    Hence we get that $[\calQ,Y]\in\ker\underline{\omega}$, which proves the statement.
\end{proof}

Denote by $\calQ^\de$ the vector field on $\calF^\de$ such that 
\[
\delta \pi\calQ=\calQ^\de.
\]

\begin{lem}
    The function $\underline{\calS}$ is basic.
\end{lem}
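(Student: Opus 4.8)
The plan is to show that $\underline{\calS}$ is constant along the leaves of the foliation $\calY$ integrating $\ker\underline{\omega}$. Since a function is basic precisely when it is annihilated by every vector field tangent to the foliation, it suffices to prove that $L_Y\underline{\calS}=\iota_Y\delta\underline{\calS}=0$ for all $Y\in\ker\underline{\omega}$. For a function there is no horizontality condition to verify, so invariance along the foliation is the whole content of basicness; the descent to a genuine function $\calS^\de$ on $\calF^\de$ with $\underline{\calS}=\pi^*\calS^\de$ then follows from the standing assumption that the leaf space $\calF^\de=\calF/\calY$ is smooth and $\pi$ a submersion.

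The key input is the defining relation $\delta\underline{\calS}=\iota_\calQ\underline{\omega}$ established in the preceding lemma (valid since $k\neq-1$). Fixing $Y\in\ker\underline{\omega}$, I would compute
\[
\iota_Y\delta\underline{\calS}=\iota_Y\iota_\calQ\underline{\omega}=\pm\,\iota_\calQ\iota_Y\underline{\omega}=0,
\]
where the middle equality is the graded (anti)commutativity of interior products and the last equality uses $\iota_Y\underline{\omega}=0$, which is exactly the condition $Y\in\ker\underline{\omega}$. The precise sign is irrelevant, since the right-hand side vanishes outright. This gives $L_Y\underline{\calS}=\iota_Y\delta\underline{\calS}=0$ for every such $Y$, which is the desired infinitesimal invariance.

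The remaining points are bookkeeping rather than substance: one must confirm that the super-sign conventions indeed yield $\iota_Y\iota_\calQ=\pm\iota_\calQ\iota_Y$ as operators on forms (both $Y$ and $\calQ$ being homogeneous of definite parity), and that infinitesimal invariance along all of $\ker\underline{\omega}$ upgrades to genuine constancy along the leaves, so that $\underline{\calS}$ factors through $\pi$. The latter is exactly where the smoothness assumption on $\calF^\de$ enters. I expect no genuine obstacle: the entire argument is a one-line consequence of $\iota_Y\underline{\omega}=0$ together with $\delta\underline{\calS}=\iota_\calQ\underline{\omega}$, and the only care needed concerns the graded signs, which do not affect the vanishing.
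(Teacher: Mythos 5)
Your proposal is correct and coincides with the paper's own proof: both use the relation $\delta\underline{\calS}=\iota_\calQ\underline{\omega}$ and compute $L_Y\underline{\calS}=\iota_Y\iota_\calQ\underline{\omega}=\pm\,\iota_\calQ\iota_Y\underline{\omega}=0$ for $Y\in\ker\underline{\omega}$. Your additional remarks on graded signs and on smoothness of the leaf space (needed to pass from leafwise constancy to $\underline{\calS}=\pi^*\calS^\de$) are consistent with the paper's standing assumptions.
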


\begin{proof}
    Let $Y\in\ker\underline{\omega}$. Then, using Lemma \ref{lem:Lemma_1}, we get that
    \[ L_Y\underline{\calS}=\iota_Y\delta\underline{\calS}=\iota_Y\iota_\calQ\underline{\omega}=\iota_\calQ\iota_Y\underline{\omega}=0.
    \]
\end{proof}

Then we can write $\underline{\calS}=\pi^*\calS^\de$ for a uniquely determined function $\calS^\de$. Thus, we get
\[
\iota_{\calQ^\de}\omega^\de=\delta\calS^\de
\]
and 
\[
\frac{1}{2}\iota_\calQ\iota_\calQ\omega-\calT=\pi^*\calS^\de.
\]

\begin{lem}
    The function $\underline{\calT}$ is basic.
\end{lem}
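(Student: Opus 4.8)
The plan is to mirror the proof that $\underline{\calS}$ is basic. Since $\calY=\ker\underline{\omega}$ is involutive and $\calF^\de=\calF/\calY$ is assumed smooth with projection $\pi$, a function on $\calF$ descends to a function on $\calF^\de$ exactly when it is annihilated by every $Y\in\ker\underline{\omega}$. So it suffices to show $L_Y\underline{\calT}=0$ for all such $Y$. As $\underline{\calT}$ is a function, $L_Y\underline{\calT}=\iota_Y\delta\underline{\calT}$, and the whole task reduces to evaluating $\iota_Y\delta\underline{\calT}$.

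The key input is the preceding lemma, which gives $\delta\underline{\calT}=-\frac12\iota_{[\calQ,\calQ]}\underline{\omega}$. Substituting this in yields
\[
L_Y\underline{\calT}=\iota_Y\delta\underline{\calT}=-\frac12\,\iota_Y\iota_{[\calQ,\calQ]}\underline{\omega}.
\]
The next step is to commute the two interior products: in the $\Z$-graded/super setting interior products anticommute up to the appropriate Koszul sign, so $\iota_Y\iota_{[\calQ,\calQ]}\underline{\omega}=\pm\,\iota_{[\calQ,\calQ]}\iota_Y\underline{\omega}$. By the very definition of $\calY=\ker\underline{\omega}$ we have $\iota_Y\underline{\omega}=0$, hence $L_Y\underline{\calT}=0$ and $\underline{\calT}$ is basic. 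This is the exact analogue of the basicness argument for $\underline{\calS}$, with the defining relation $\iota_\calQ\underline{\omega}=\delta\underline{\calS}$ replaced by $-\frac12\iota_{[\calQ,\calQ]}\underline{\omega}=\delta\underline{\calT}$.

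There is essentially no serious obstacle here; the only point requiring a little care is the sign bookkeeping when commuting $\iota_Y$ past $\iota_{[\calQ,\calQ]}$, which requires noting that $Y$ and $[\calQ,\calQ]$ have well-defined parities so that the Koszul rule applies. Since the inner contraction $\iota_Y\underline{\omega}$ vanishes regardless of this sign, the precise sign never enters the conclusion, and the computation collapses to zero in one line.
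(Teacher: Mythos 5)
Your proof is correct, but it takes a genuinely different route from the paper's. You obtain basicness by contracting the previously established identity $\delta\underline{\calT}=-\frac{1}{2}\iota_{[\calQ,\calQ]}\underline{\omega}$ with $Y\in\ker\underline{\omega}$ and then commuting the two interior products, so that everything collapses onto $\iota_Y\underline{\omega}=0$; the only inputs are that lemma and the definition of the kernel, making your argument the exact formal twin of the basicness proof for $\underline{\calS}$. The paper instead starts from the boundary BFV equivariant CME $\underline{\calT}=\frac{1}{2}L_\calQ\underline{\calS}$ and computes
\[
L_Y\underline{\calT}=\tfrac{1}{2}L_Y L_\calQ\underline{\calS}=\tfrac{1}{2}[L_Y,L_\calQ]\underline{\calS}=\pm\tfrac{1}{2}L_{[Y,\calQ]}\underline{\calS}=0,
\]
which relies on two structural facts established earlier in the reduction: that $\underline{\calS}$ is basic (used twice, to drop the $L_\calQ L_Y\underline{\calS}$ term and to kill the final term) and that $\calQ$ is projectable (so that $[Y,\calQ]\in\ker\underline{\omega}$). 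Your version is more self-contained, bypassing both the projectability lemma and the prior basicness result; the paper's version reuses the already-built boundary structure and exhibits the basicness of $\underline{\calT}$ as a direct consequence of that of $\underline{\calS}$, emphasizing that the whole boundary package descends at once. Both are valid one-line computations, and, as you correctly note, the Koszul sign in your interior-product exchange never matters because the contraction $\iota_Y\underline{\omega}$ vanishes identically.
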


\begin{proof}
    Let $Y\in\ker\underline{\omega}$. Then we get that
    \[
    L_Y \underline{\calT}=\frac{1}{2}L_Y L_\calQ \underline{\calS}=\frac{1}{2}[L_Y,L_\calQ]\underline{\calS}=\pm \frac{1}{2}L_{[Y,\calQ]}\underline{\calS}=0.
    \]
\end{proof}

Let $\calT^\de$ be the corresponding function on $\calF^\de$. Then we can write
\[
\underline{\calT}=\pi^*\calT^\de.
\]
Thus, we get that
\[
\frac{1}{2}L_{\calQ^\de}\calS^\de=\calT^\de,
\]
and
\[
L_\calQ \calT=-\pi^*\calT^\de.
\]
Moreover, we get
\[
\frac{1}{2}\big\{\calS^\de,\calS^\de\big\}=\calT^\de,
\]
where $\{\enspace,\enspace\}$ again denotes the Poisson bracket on $\calF^\de$ induced by $\omega^\de$, and 
\[
\frac{1}{2}\iota_{\left[\calQ^\de,\calQ^\de\right]}\omega^\de=-\delta \calT^\de.
\]

\subsection{Summary}
Let us summarize what we have so far. On $\calF^\de$ we have $\omega^\de$ with $\mathrm{gh}(\omega^\de)=0$, $\calQ^\de$ with $\mathrm{gh}(\calQ^\de)=1$, $\calS^\de$ with $\mathrm{gh}(\calS^\de)=1$ and $\calT^\de$ with $\mathrm{gh}(\calT^\de)=1$. Moreover, the following equations hold:
\begin{align}
    \iota_{\calQ^\de}\omega^\de&=\delta\calS^\de,\\
    \frac{1}{2}L_{\calQ^\de}\calS^\de&=\calT^\de,\\
    \frac{1}{2}\big\{\calS^\de,\calS^\de\big\}&=\calT^\de,\\
    \frac{1}{2}\iota_{\left[\calQ^\de,\calQ^\de\right]}\omega^\de&=-\delta \calT^\de.
\end{align}

Furthermore, if we denote by $\pi\colon \calF\to \calF^\de$ the projection, we get the following equations:
\begin{align}
    \delta\pi \calQ&=\calQ^\de,\\
    \frac{1}{2}\iota_\calQ\iota_\calQ\omega-\calT&=\pi^*\calS^\de,\\
    L_\calQ \calT&=-\pi^*\calT^\de.
\end{align}

Finally, if $\omega^\de=\delta\alpha^\de$ with $\underline{\alpha}=\pi^*\alpha^\de$, we get the following equations:
\begin{align}
    \iota_\calQ\omega&=\delta\calS+\pi^*\alpha^\de,\\
    \frac{1}{2}L_\calQ \calS&=\calT+\pi^*\left(\calS^\de-\frac{1}{2}\iota_{\calQ^\de}\alpha^\de\right).
\end{align}

\begin{rem}
    This construction will naturally lead to a quantization, where the boundary BFV operator $\Omega$ is given (as a first approximation) by the  standard-order (a.k.a.\ Schr\"odinger) quantization of $\calS^\de$, i.e., $\Omega=\widehat{\calS^\de}$,\footnote{Here $\widehat{\enspace}$ denotes the standard-order operator quantization.} with the property
    \[
    \Omega^2=\widehat{\calT^\de}.
    \]
    Moreover, we expect the state 
    \[
    \psi=\int_\calL \exp \frac{\I}{\hbar}\calS,
    \]
    with Lagrangian submanifold $\calL\subset \calF$ such that $\calT\vert_\calL=0$, to satisfy
    \[
    \widehat{\calT^\de}\psi=0,
    \]
    where $\widehat{\calT^\de}$ satisfies the equation\footnote{This is an extra assumption, not a general property.}
    \[
    \left(\Omega-\I\hbar\Delta\right)\widehat{\calT^\de}=0.
    \]
    We will return to this below.
\end{rem}

Now assume that $\omega^\de=\delta\alpha^\de$ with $\underline{\alpha}=\pi^*\alpha^\de$. Then we get that the following two equations are satisfied:
\begin{align*}
    \iota_\calQ\omega&=\delta\calS+\pi^*\alpha^\de,\\
    \frac{1}{2}L_\calQ \calS&=\calT+\pi^*\left(\calS^\de-\frac{1}{2}\iota_{\calQ^\de}\alpha^\de\right).
\end{align*}

\subsection{Prequantization} Now we want to turn to the quantization in the presence of boundary. We refer to \cite[Section 9.2]{CMW2022} for more details.
Assume that we can choose a polarization such that $\calF^\de=T^*\calB$. Moreover, we assume a splitting 
\[
\calF=\calY\times\calB.
\]
We can then split the cohomological vector field and the symplectic form as
\begin{align*}
    \calQ&=\calQ_\calY+\calQ_\calB,\\  \omega&=\omega_{\calY\calY}+\omega_{\calY\calB}+\omega_{\calB\calB}.
\end{align*}
Now we use $\iota_\calQ\omega=\delta\calS^f+\pi^*\alpha^f$, where $\alpha^f$ is adapted to the foliation and we have\footnote{i.e., $\alpha^f$ is the canonical 1-form of $T^*\calB$}
\begin{align*}
    \calS^f&=\calS+\pi^*f,\\
    \alpha^f&=\alpha^\de-\delta f.
\end{align*}
Furthermore, we have
\begin{align*}
    \iota_{\calQ_\calY}\omega_{\calY\calY}+\iota_{\calQ_\calB}\omega_{\calY\calB}&=\delta_\calY\calS^f,\\ \iota_{\calQ_\calY}\omega_{\calY\calB}+\iota_{\calQ_\calB}\omega_{\calB\calB}&=\delta\calS^{f}+\alpha^{f}.
\end{align*}
We assume a \emph{good} splitting, i.e., one with the property that $\iota_{\calQ_\calB}\omega_{\calY\calB}=0$. Then we get that
\[
\iota_{\calQ_\calY}\omega_{\calY\calY}=\delta_\calY\calS^f.
\]
Hence, we get
\[
\left(\calS^{f},\calS^{f}\right)_\calY=\iota_{\calQ_\calY}\iota_{\calQ_\calY}\omega_{\calY\calY}=\iota_{\calQ}\iota_\calQ\omega-\iota_{\calQ_\calB}\iota_{\calQ_\calB}\omega_{\calB\calB}.
\]
This implies that 
\[
\left(\calS^{f}, \calS^{f}\right)_\calY=2\calT+\underbrace{2\calS_\text{BFV}-\iota_{\calQ_\calB}\iota_{\calQ_\calB}\omega_{\calB\calB}}_{=:2\calS^\de},
\]
which means that
\[
\frac{1}{2}\left(\calS^{f},\calS^{f}\right)_\calY=\calT+\calS^\de.
\]
We thus get
\begin{multline}
    \Delta_\calY \exp \frac{\I}{\hbar}\calS^f=-\frac{1}{2\hbar^2}\left(\calS^f,\calS^f\right)_\calY \exp \frac{\I}{\hbar}\calS^f\\=-\frac{1}{\hbar^2}\calT\exp \frac{\I}{\hbar}\calS^f-\frac{1}{\hbar^2}\calS^\de \exp \frac{\I}{\hbar}\calS^f.
\end{multline}

If we further assume that the splitting is discontinuous, we get $\omega_{\calY\calB}=0$ and $\omega_{\calB\calB}=0$. The vanishing of $\omega_{\calY\calB}$ implies that $\calS^\de=\calS_\text{BFV}$. Thus, we get
\[
\delta_\calB \calS^f=-\alpha^f=-\sum_i p_i\dd q_i.
\]
Hence, we have
\[
\frac{\delta \calS}{\delta q_i}=-p_i. 
\]
In this case, we can define
\[
\Omega:=\widehat{\calS^f}
\]
as the Schr\"odinger quantization. That is, the operator quantization of $q\in\calB$ is given by multiplication with $q$ and the operator quantization of $p\in T^*_q\calB$ is given by $\I\hbar \frac{\delta}{\delta q}$, together with the standard ordering. Then we have
\[
\Omega \exp \frac{\I}{\hbar}\calS^f=\calS^\de \exp \frac{\I}{\hbar}\calS^f.
\]
Finally, this implies the \emph{equivariant} modified quantum master equation (emQME)
\[
\left(\Omega+\hbar^2\Delta_\calY\right)\exp \frac{\I}{\hbar}\calS^f=-\calT\exp \frac{\I}{\hbar}\calS^f.
\]
Thus, the assumptions for equivariant quantization in the BV-BFV formalism are given by the equations
\begin{align}
\Omega^2&=\widehat{\calT^\partial},\\
\widehat{\calT}\exp \frac{\I}{\hbar}\calS^f&=0,\\
\calT\vert_{\calL_{\calY, \psi}}&=0.
\end{align}

\subsection{Equivariant AKSZ Theories}
We want to apply the equivariant BV-BFV formalism to AKSZ theories. Consider as the space of fields the mapping space 
\[
\calF_M=\mathrm{Map}(T[1]M,N)
\]
for a target Hamiltonian manifold $(N,\omega=\dd \alpha, \Theta)$. The AKSZ-action $\calS_M^\mathrm{AKSZ}$ is then such that 
\[
\left(\calS_M^\mathrm{AKSZ},\calS_M^\mathrm{AKSZ}\right)=0,\quad \text{and}\quad \iota_{\calQ_M}\omega_M=\delta\calS_M^\mathrm{AKSZ},
\]
if $\partial M=\varnothing$ and such that 
\[
\iota_{\calQ_M}\omega_M=\delta \calS_M^\mathrm{AKSZ}+\pi^*\alpha^\de_{\de M},
\]
if $\de M\not=\varnothing$. Note that $\calQ_M$ is the corresponding cohomological vector field of $\calS^\mathrm{AKSZ}_M$ and $\omega_M$ is the symplectic structure on $\calF_M$ constructed out of $\omega$ as we have seen before.

We first recall the equivariant extension of an AKSZ theory in the case $\partial M=\varnothing$, as introduced in \cite{BonechiCattaneoZabzine2022}. Assume $\alpha^\de_{\de M}=\sum_ix_i\dd y_i$. For $v\in\mathfrak{X}(M)$, we define
\[
\calS_{\iota}^\mathrm{AKSZ}\coloneqq\int_M \mathbf{X}_i\iota_v\mathbf{Y}_i.
\]
Thus, we define $\widehat{\calS}_M^\mathrm{AKSZ}:=\calS_M^\mathrm{AKSZ}+u\calS_\iota^\mathrm{AKSZ}$, where $u$ is a variable of degree $+2$. When $\partial M=\varnothing$, we have
\begin{align}
    \iota_{\widehat{\calQ}_M}\omega_M&=\delta\widehat{\calS}_M^\mathrm{AKSZ},\\
    \widehat{\calQ}_M&:=\calQ_M+\calQ_\iota,\\
    \left(\widehat{\calS}_M^\mathrm{AKSZ},\widehat{\calS}_M^\mathrm{AKSZ}\right)&=u\calS_L^\mathrm{AKSZ},\\
    \calS_L^\mathrm{AKSZ}&\coloneqq\int_M\mathbf{X}_iL_v\mathbf{Y}_i,\\
    \label{eq:eq_CME}
    \frac{1}{2}\iota_{\left[\widehat{\calQ}_M,\widehat{\calQ}_M\right]}\omega_M&=u\delta \calS_L^\mathrm{AKSZ},
\end{align}
where $\widehat{\calQ}_M$, $\calQ_M$ and $\calQ_\iota$ are the Hamiltonian vector fields associated to $\widehat{\calS}^\mathrm{AKSZ}_{M}$, $\calS^\mathrm{AKSZ}_M$ and $\calS^\mathrm{AKSZ}_\iota$, respectively.
Note that, using the general notation, we have $\calT=-u\calS_L^\mathrm{AKSZ}$.
\begin{rem}\label{r:generalequiv}
    For simplicity we discuss here only the case of a single vector field $v$. The general case, also discussed in \cite{BonechiCattaneoZabzine2022}, of an involutive family $\{v_1,\dots,v_r\}$ of vector fields is treated similarly. One just has to introduce variables $u_1,\dots,u_r$ of degree $+2$ and define 
    \[
    \begin{split}
        \Tilde\calS_{\iota}^\mathrm{AKSZ}&\coloneqq\sum_{\mu=1}^r u_\mu \int_M \mathbf{X}_i\iota_{v_\mu}\mathbf{Y}_i\\
        \intertext{and}
        \Tilde\calS_L^\mathrm{AKSZ}&\coloneqq\sum_{\mu=1}^r u_\mu\int_M\mathbf{X}_iL_{v_\mu}\mathbf{Y}_i.
    \end{split}
    \]
    All the formulae above immediately generalize and, in particular, we get $\calT=-\Tilde\calS_L^\mathrm{AKSZ}$. For the rest of the paper, we will focus on the case of a single vector field.
\end{rem}

When $\de M\not=\varnothing$, we have to check whether we still have an equivariant BV theory, namely, that \eqref{eq:eq_CME} still holds. We start with computing 
\begin{align*}
\delta \calS_L^\mathrm{AKSZ}&=\int_M \delta \mathbf{X}_iL_v\mathbf{Y}_i\pm \int_M\mathbf{X}_iL_v\delta \mathbf{Y}_i\\
&=\int_M \delta \mathbf{X}_i\iota_v \dd \mathbf{Y}_i+\int_M\delta \mathbf{X}_i\dd \iota_v\mathbf{Y}_i\pm \int_M \mathbf{X}_i\iota_v\dd\delta \mathbf{Y}_i\pm \int_M \mathbf{X}_i\dd \iota_v\delta \mathbf{Y}_i.
\end{align*}
If $v$ is tangent to the boundary, then we get
\begin{multline*}
    \int_M \mathbf{X}_i\iota_v\dd\delta \mathbf{Y}_i\pm \int_M \mathbf{X}_i\dd \iota_v\delta \mathbf{Y}_i\\
    =\pm \int_M\iota_v\mathbf{X}_i\dd \delta \mathbf{Y}_i\pm \int_M\dd\mathbf{Y}_i\iota_v\delta\mathbf{Y}_i\pm \int_{\de M}\mathbf{X}_i\iota_v\delta \mathbf{Y}_i\\
    =\pm \int_M\dd\iota_v \mathbf{X}_i\delta\mathbf{Y}_i\pm \int_M \iota_v \mathbf{Y}_i\delta \mathbf{Y}_i\pm \int_{\de M}\big(\iota_v\mathbf{X}_i\delta\mathbf{Y}_i+\mathbf{X}_i\iota_v\delta\mathbf{Y}_i\big).
\end{multline*}
In this case, \eqref{eq:eq_CME} holds. 

Note that in general, 
the integration by parts gives rise to the term $\int_{\de M}\mathbf{X}_i\iota_\partial^*(\iota_v\delta \mathbf{Y}_i)$, where $\iota_\partial$ denotes the inclusion map $\partial M\hookrightarrow M$.
If $v$ were not tangent to the boundary, in addition to the good term displayed above, this would also contain a term
$\int_{\de M}\mathbf{X}_i\iota_\partial^*(v^n\delta \mathbf{Y}_i^n)$, where $n$ denotes the transversal components. In conclusion, we have the
\begin{thm}
    An equivariant AKSZ theory defined in terms of a vector field $v$ is boundary-compatible if and only if $v$ is tangent to the boundary.
\end{thm}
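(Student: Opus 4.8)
The plan is to reduce the theorem to a boundary-term analysis of the single identity \eqref{eq:eq_CME}. Since $\calT=-u\calS_L^\mathrm{AKSZ}$, condition \eqref{eq:eq_CME} is exactly the statement that $[\widehat\calQ_M,\widehat\calQ_M]$ is Hamiltonian with $\frac12\iota_{[\widehat\calQ_M,\widehat\calQ_M]}\omega_M=u\,\delta\calS_L^\mathrm{AKSZ}$, i.e.\ the boundary-compatibility condition isolated in the classical equivariant setting above. Because every object in the AKSZ construction is the transgression of a local density on $M$, the failure of this identity can only be a boundary integral over $\de M$, so I would compute $\delta\calS_L^\mathrm{AKSZ}$ explicitly and track precisely which boundary terms survive.

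First I would compute $\delta\calS_L^\mathrm{AKSZ}=\int_M\delta\mathbf{X}_i\,L_v\mathbf{Y}_i\pm\int_M\mathbf{X}_i\,L_v\delta\mathbf{Y}_i$ and expand each Lie derivative by Cartan's formula $L_v=\iota_v\dd+\dd\iota_v$. The $\delta\mathbf{X}_i$-terms are already in the correct bulk form, so the obstruction can only come from the two pieces containing $\iota_v\dd\,\delta\mathbf{Y}_i$ and $\dd\iota_v\delta\mathbf{Y}_i$. Integrating these by parts via Stokes moves $\dd$ off $\delta\mathbf{Y}_i$ and produces the boundary integral $\int_{\de M}\mathbf{X}_i\,\iota_\de^*(\iota_v\delta\mathbf{Y}_i)$, where $\iota_\de\colon\de M\hookrightarrow M$ is the inclusion, while the remaining bulk terms reassemble into exactly the closed-case right-hand side of \eqref{eq:eq_CME}. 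This is the computation already displayed, and it gives the \emph{if} direction at once: when $v$ is tangent to $\de M$ one may pull the interior product through $\iota_\de^*$, the boundary contributions organise into the ``good'' combination, and \eqref{eq:eq_CME} holds.

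For the \emph{only if} direction I would decompose $v$ along $\de M$ into tangential and transversal parts, $v=v^\parallel+v^n$. The tangential part reproduces the good boundary combination as above, whereas the transversal part leaves the irreducible residue $\int_{\de M}\mathbf{X}_i\,\iota_\de^*(v^n\,\delta\mathbf{Y}_i^n)$ involving the normal components $\mathbf{Y}_i^n$. The key point, and the main obstacle, is to argue that this residue is a genuine obstruction: it is a nonvanishing $1$-form on $\calF_M$ that cannot be absorbed into $\delta$ of any bulk local functional, since the normal components $\mathbf{Y}_i^n$ are not determined by the restriction of the fields to $\de M$. Choosing boundary configurations with generic $\mathbf{X}_i$ and generic normal data then shows the residue is nonzero whenever $v^n\neq0$, so \eqref{eq:eq_CME} fails and the theory is not boundary-compatible. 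Beyond this, the remaining work is the careful bookkeeping of the graded signs produced by the degree shifts and the odd parities throughout the integration by parts.
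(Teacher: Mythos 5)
Your proposal is correct and follows essentially the same route as the paper: compute $\delta\calS_L^\mathrm{AKSZ}$ via Cartan's formula, integrate by parts, and identify the boundary residue $\int_{\de M}\mathbf{X}_i\,\iota_\de^*(v^n\delta\mathbf{Y}_i^n)$ involving the transversal components as the obstruction, which vanishes precisely when $v$ is tangent to $\de M$. Your added justification of the ``only if'' direction (that this residue cannot be absorbed into the variation of a bulk functional and is nonzero for generic field configurations) merely makes explicit what the paper asserts.
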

Note that in the general case of Remark~\ref{r:generalequiv} the theory is boundary-compatible if and only if all the $v_\mu$s are tangent to the boundary.

From now on we then assume that $v$ is tangent to the boundary.
We have 
\[
\iota_{\widehat{\calQ}_M}\omega_M=\iota_{\mathcal{Q}_M}\omega_M+u\iota_{\calQ_\iota}\omega_M=\delta \calS_M^\mathrm{AKSZ}+\pi_M^*\alpha^\de_{\de M}+u\calS_\iota^\mathrm{AKSZ}=\delta\widehat{\calS}_M+\pi_M^*\alpha^\de_{\de M},
\]
where $\alpha^\de_{\de M}$ is the Noether 1-form of AKSZ theories, and $\calQ^\de_{\de M}$ is the cohomological vector field on the boundary of AKSZ theories. Thus, we have
\[
\widehat{\calS}^{\de, \mathrm{AKSZ}}_{\de M}=\calS^{\de, \mathrm{AKSZ}}_{\de M}+\calS^{\de, \mathrm{AKSZ}}_\iota.
\]
We can now compute $L_{\widehat{\calQ}_M}\calS_L^\mathrm{AKSZ}$. We get
\[
L_{\widehat{\calQ}_M}\calS_L^\mathrm{AKSZ}=L_{\calQ_M} \calS_L^\mathrm{AKSZ}+u\underbrace{L_{\calQ_\iota}\calS_L^\mathrm{AKSZ}}_{\big(\calS_\iota^\mathrm{AKSZ},\, \calS_L^\mathrm{AKSZ}\big)=0}=\pm \int_{\de M}\mathbf{X}_iL_v\mathbf{Y}_i.
\]
This implies that
\[
\frac{1}{2}\left\{\widehat{\calS}^{\de, \mathrm{AKSZ}}_{\de M},\widehat{\calS}^{\de, \mathrm{AKSZ}}_{\de M}\right\}=u\widehat{\calS}^{\de, \mathrm{AKSZ}}_L,
\]
which yields $\calT^\partial=u\widehat{\calS}^{\de, \mathrm{AKSZ}}_L$.

\subsection{Equivariant Abelian $BF$ Theory}
$BF$ theory is an AKSZ theory, so we explicitly have
\begin{align*}
\Hat\calS_M&=\int_M\big(\langle \mathbf{B},\dd \mathbf{A}\rangle
+u\langle \mathbf{B},\iota_v \mathbf{A}\rangle\big),\\
\calT_M&=-u\int_M \langle \mathbf{B},L_v \mathbf{A}\rangle,\\
\Hat\calS_{\partial M}^\partial&=\int_{\partial M}\big(\langle \mathbf{B},\dd \mathbf{A}\rangle
+u\langle \mathbf{B},\iota_v \mathbf{A}\rangle\big),\\
\calT_{\partial M}^\partial&=-u\int_{\partial M} \langle \mathbf{B},L_v \mathbf{A}\rangle.
\end{align*}
We split the boundary $\partial M$ into two disjoint components $\partial_1 M$ and $\partial_2 M$. We choose the $\mathbf{A}$-polarization on one of them and the $\mathbf{B}$-polarization on the other. By the Schr\"odinger quantization we get
\begin{align*}
\Omega &= \ii\hbar\left(
\int_{\partial_1 M} \big(\dd\mathbf{A}+u\iota_v\mathbf{A}\big)\frac\delta{\delta \mathbf{A}}
+
\int_{\partial_2 M} \big(\dd\mathbf{B}+u\iota_v\mathbf{B}\big)\frac\delta{\delta \mathbf{B}}
\right),\\
\widehat{\calT^\partial} &= u\ii\hbar\left(
\int_{\partial_1 M} L_v\mathbf{A}\frac\delta{\delta \mathbf{A}}
+
\int_{\partial_2 M} L_v\mathbf{B}\frac\delta{\delta \mathbf{B}}.
\right)
\end{align*}
We clearly have $\Omega^2=\widehat{\calT^\partial}$.

To compute the state $\psi$, we have to pick a $v$-invariant metric and then proceed as in \cite[Section 4]{CMR2} using the Hodge decomposition. In particular, we get a propagator $\eta$ which satisfies
\[
\dd\eta = \chi_i\otimes\chi^i,\qquad L_v\eta=0,
\]
where $\{\chi_i\}$ and $\{\chi^i\}$ are the chosen bases of harmonic forms relative to the boundary. We have $L_v\chi_i = L_v\chi^i=0$ for all $i$. Finally, we define
\[
\mathbf{a} = z^+_i\chi^i,\qquad \mathbf{b} = z^i\chi_i,
\]
where $\{z_i,z^+_i\}$ are the Darboux coordinates for the BV space of residual fields. With these notations, we have $\psi=\tau\exp\frac\ii\hbar \calS_\text{eff}$ with
\begin{multline*}
\calS_\text{eff} = \int_{\partial_1 M} \mathbf{A}\mathbf{b} +
\int_{\partial_2 M} \mathbf{a}\mathbf{B} +
\int \mathbf{A}\eta\mathbf{B} + u \int_M \mathbf{b}\iota_v\mathbf{a}\\
+\sum_{n>1}u^n\left(
\int \mathbf{A}\eta\iota_v\eta\cdots\iota_v\eta\mathbf{B}
+\int \mathbf{A}\eta\iota_v\eta\cdots\iota_v\eta\mathbf{b}
+\int \mathbf{a}\eta\iota_v\eta\cdots\iota_v\eta\mathbf{B}
\right)
\end{multline*}
and $\tau$ the Ray--Singer torsion of $M$ with boundary.
One can easily check that
\[
\widehat{\calT^\partial}\psi=0,\qquad \big(\Omega+\hbar^2\Delta_\text{res}\big)\psi=0,
\]
where $\Delta_\text{res}$ denotes the BV Laplacian on residual fields.

\section*{Conflict of interest and data availability}
On behalf of all authors, the corresponding author states that there is no conflict of interest and that the manuscript has no associated data.

\printbibliography

\end{document}